\newtheorem{theorem}{Theorem}
\newtheorem{proposition}{Proposition}
\newtheorem{assumption}{Assumption}
\newcommand{\ba}{\begin{array}}
\newcommand{\ea}{\end{array}}
\newcommand{\be}{\begin{equation}}
\newcommand{\ee}{\end{equation}}
\newcommand{\ds}{\displaystyle}
\newcommand{\mc}{\mathcal}
\def\1{\mathds{1}}
\newcommand{\R}{\mathbb{R}}
\newcommand{\V}{\mathcal{V}}
\newcommand{\G}{\mathcal{G}}
\def\R{\mathbb{R}}
\def\0{\boldsymbol{0}}
\tikzstyle{v_c}=[circle, draw,inner sep=2pt, minimum width=12pt, color=blue]
\tikzstyle{v_a}=[circle, draw,inner sep=2pt, minimum width=12pt, color=red]
\tikzstyle{edge} = [draw,thick,-,font=\small ]
\tikzstyle{label} = [draw,fill=black,font=\normalsize]
\def\G{{\mathcal G}}
\def\BibTeX{{\rm B\kern-.05em{\sc i\kern-.025em b}\kern-.08em
	T\kern-.1667em\lower.7ex\hbox{E}\kern-.125emX}}
\begin{document}
%\title{\LARGE \bf Promoting Sustainable Behaviors in Heterogeneous Societies: \\A Data-Driven Multilayer Opinion-Behavior Framework}
\title{Modeling and Control of Sustainable Transitions through Opinion-Behavior Coupling in Heterogeneous Networks}
\author{Martina~Alutto, \IEEEmembership{Member, IEEE}, Sofia Bellotti, Fabrizio Dabbene, \IEEEmembership{Senior Member, IEEE}, Chiara Ravazzi, \IEEEmembership{Senior Member, IEEE}
\thanks{Martina~Alutto is Postdoctoral Researcher at KTH Royal Institute of
Technology, Stockholm (e-mail: alutto@kth.se). Fabrizio Dabbene (Director of research) and Chiara Ravazzi (Senior researcher) are with the National Research Council (CNR-IEIIT), c/o Politecnico di Torino, 10129 Torino, Italy (e-mail: {\{chiara.ravazzi;\,fabrizio.dabbene;\}@cnr.it}). 
Sofia Bellotti is a student at Politecnico di Torino and intern at CNR-IEIIT, Italy.
	}
\thanks{This work has been supported by the European Union -- Next Generation EU, Mission 4, Component 1, under the PRIN project {\em{TECHIE: A control and network-based approach for fostering the adoption of new technologies in the ecological transition}}, Cod. 2022KPHA24, CUP Master: D53D23001320006, CUP: B53D23002760006.}%
}

%\thinmuskip=0.3mu  % Spazio per simboli stretti (es. x+y)
%\medmuskip=0.3mu   % Spazio per relazioni (es. x = y)
%\thickmuskip=0.3mu % Spazio per disuguaglianze (es. x < y)

\maketitle
\begin{abstract}	
Understanding how sustainable behaviors spread within heterogeneous societies requires the integration of behavioral data, social influence mechanisms, and structured approaches to control. In this paper, we propose a data-driven computational framework for coupled opinion-adoption dynamics in social systems. Each node in the multilayer network represents a community characterized by a specific age group and mobility level, derived from large-scale survey data on the predisposition to adopt electric vehicles in Northern Europe. The proposed model captures three mechanisms: behavioral contagion through social and informational diffusion, abandonment driven by dissatisfaction, and feedback between opinions and adoption levels through social influence. Analyzing the equilibrium points of the coupled system allows us to derive the conditions that enable large-scale adoption. We empirically calibrate the model using data to construct synthetic populations and social similarity networks, which we use to explore targeted interventions that promote sustainable transitions. Specifically, the analysis focuses on two types of control strategies: opinion-based policies, which act on the social network layer, and policies that aim to improve experience and reduce dissatisfaction. Simulation results show that the latter ensure more stable and long-term adoption, offering concrete insights for designing effective interventions in sociotechnical transitions toward sustainability.
\end{abstract}

\begin{IEEEkeywords}
Computational social systems, opinion dynamics, innovation diffusion,
control of networked systems, sustainability transitions.\end{IEEEkeywords}

\graphicspath{{Plots}}
\section{Introduction} 
Global emissions remain extremely high despite a series of policy initiatives aimed at promoting sustainable mobility and growing awareness of the climate crisis in modern societies. According to a growing body of research, large-scale behavioral transitions, such as the adoption of electric vehicles (EVs), are hampered not only by economic or infrastructural barriers but also by social beliefs, peer effects, and feedback mechanisms that shape individual decision-making \cite{Coffman2017}%, Steg2018, Cialdini2003, markard2020challenges}.

The use of data-informed modeling to support sustainable transitions has been the subject of more recent research in computational social systems. For example, in \cite{BRESCHI2023103651} the authors propose a data-driven, human-centered framework that helps policymakers in assessing how user attitudes, social imitation, and EV technology interact to affect the adoption of EVs naturally. Similarly, \cite{VILLA2024106106} highlight the importance of using large-scale survey data to inform mathematical models and assess how well policy interventions contribute to accelerating the adoption of green technology.

A powerful set of methodologies to analyze and exploit the interactions between personal attitudes, social norms, and external interventions is offered by opinion dynamics and control theory \cite{nylund2022enabling}. The DeGroot’s consensus process \cite{DeGroot1974} and the Friedkin-Johnsen (FJ) model \cite{Friedkin1990} are two examples of classical models of opinion dynamics that have been expanded to account for persistent biases and to design informed interventions. For instance, in \cite{piccinin2025innovation} the authors propose an extension of the FJ model to examine the long-term effects of fostering policies and to design nudging strategies that balance widespread adoption with financial restrictions. 

Although opinion dynamics models show how attitudes change in response to social influences, complementary diffusion frameworks are needed to understand how these attitudes result in tangible behavioral changes. The spread of sustainable behaviors often resembles contagion processes, where adoption is triggered and reinforced through interpersonal interactions and network effects. Numerous works that adapt innovation diffusion and epidemic-inspired models to the social domain have been inspired by this analogy \cite{daley1964epidemics, bikhchandani1992theory, Bass1969, Rogers2010}. 
Compartmental formulations such as SIS and SIR, originally developed for epidemiology \cite{Kermack.McKendrick:1927}, have been widely applied to model behavioral contagion \cite{goffman1964generalization, goffman1966mathematical, Pastor-Satorras2015}. Susceptible-Infected-Vigilant (SIV) \cite{Xu2024} and Susceptible-Infected-Recovered-Susceptible (SIRS) \cite{LI20141042, ZHANG2021126524} extensions can also capture temporary adoption and abandonment due to discontent, cost sensitivity or changing norms. Recent studies have extended these models to account for coupled epidemic-opinion dynamics on multi-layer networks, highlighting how information dissemination or opinion evolution and behavioral adoption can influence each other and affect propagation patterns \cite{granell2013dynamical, wang2019impact, lin2021discrete, She2022, pang2025dynamics, an2024coupled}.
Beyond epidemic-inspired approaches, models from economics and sociology, such as the Bass model \cite{Bass1969} and threshold-based frameworks \cite{Granovetter1978}, formalize adoption as the combined effect of individual decisions and peer influence. Despite their different origins, these models share a common idea: local interactions and spontaneous transitions on networks can drive global dynamics. More sophisticated mechanisms, including game-theoretical models, further capture the strategic nature of behavioral change \cite{montanari2010spread, young2011dynamics, morris2000contagion}.

Building on this line of research, recent works have emphasized the importance of designing intervention strategies that not only capture the coupling between adoption and opinion dynamics, but also take into account their evolution over time and their long-term sustainability. In this direction, \cite{alutto2025mpc} propose a predictive control framework for a multilayer adoption-opinion model, where adoption evolves through social contagion and perceived benefits, while opinions are shaped by peer influence and feedback from adoption levels. The model also includes dissatisfaction and abandonment mechanisms that can slow down diffusion. Within this setting, the authors propose a Model Predictive Control approach that dynamically adapts interventions to the predicted trajectory of the system. Different types of control (shaping opinions, adoption enhancement, and dissatisfaction reduction) are compared and the results show that predictive strategies consistently outperform static ones, achieving higher adoption at comparable or lower cost. This highlights the potential of predictive and adaptive control methods as effective and scalable tools for sustaining innovation diffusion and promoting long-term behavioral change.

\subsection{Our contribution}
In this work, we build upon the adoption-opinion framework, originally discussed in \cite{alutto2025mpc} (see also the extended version in \cite{alutto2025predictivecontrolstrategiessustaining}), and propose a data-driven extension specifically designed to account for heterogeneous societies. Our main innovation lies in explicitly accounting for heterogeneity through a distributional structure: rather than assuming homogeneous agents, we construct a synthetic population where each individual belongs to a specific mobility-age subgroup. This modeling choice is motivated not only by theoretical considerations, but also by empirical evidence: mobility habits and age have been identified as key adoption factors in analyses based on real-world population data. The mobility level is particularly relevant because highly mobile groups are more visible within social networks and can exert stronger influence on adoption dynamics, whereas less mobile groups tend to occupy more peripheral positions. Age, in turn, captures systematic differences in openness to innovation, risk perception, and responsiveness to peer influence. By incorporating these two factors jointly, our model provides a richer and more realistic representation of how adoption spreads in heterogeneous societies.  

Within this framework, we first analyze the equilibrium points with and without widespread adoption in the network and their stability. Then, we integrate survey-based data to initialize the population distribution, build the similarity social network, and infer key behavioral parameters. This yields a data-driven computational model whose simulations capture both structural heterogeneity and empirically based behavioral tendencies. Building on this foundation, we design and test different intervention policies aimed at accelerating EV adoption. We focus on two levers: (i) opinion-based interventions, which shape public attitudes and perceptions of EVs, and (ii) dissatisfaction-based interventions, which instead target adopters’ experience by reducing the risk of disaffection. These simulation results reveal the trade-offs between these approaches. While opinion shaping interventions can accelerate adoption, they also risk amplifying dissatisfaction if user experience is not improved. In contrast, dissatisfaction-focused policies lead to a more sustained level of adoption over time. These results demonstrate the potential of computational social modeling for analyzing heterogeneous behavioral systems and informing data-driven policy design.

\subsection{Outline}
The rest of the paper is organized as follows. Section~\ref{sec:model} introduces the adoption-opinion model and provides stability analysis of the equilibrium points. Section~\ref{sec:data} presents the survey-based data analysis, while Section~\ref{sec:comparison} compares different targeted control strategies. Finally, Section~\ref{sec:conclusion} concludes the paper and outlines directions for future research.

\subsection{Notation}
We denote by $\R$ and $\R_{+}$ the sets of real and nonnegative real numbers. 
%, respectively, while $\R_{+}^{n \times n}$ indicates the set of real matrices with dimension $n \times n$ and nonnegative entries. 
The all-1 vector and the all-0 vector are denoted by $\1$ and $\0$ respectively. The identity matrix and the all-0 matrix are denoted by $I$ and $\mathbb{O}$, respectively.
The transpose of a matrix $A$ is denoted by $A^T$. 
For $x$ in $\R^n$, let $||x||_1=\sum_i|x_i|$ and $||x||_{\infty}=\max_i|x_i|$ be its $l_1$- and $l_{\infty}$- norms. %, while $[x]$ denotes the diagonal matrix whose diagonal coincides with $x$. 
For an irreducible matrix $A$ in $\R_+^{n\times n}$, we let $\rho(A)$ denote the spectral radius of $A$. 
Inequalities between two matrices $A$ and $B$ in $\mathbb{R}^{n \times m}$ are understood to hold entry-wise, i.e., $A \leq B$ means $A_{ij} \leq B_{ij}$ for all $i$ and $j$, while $A < B$ means $A_{ij} < B_{ij}$ for all $i$ and $j$. 
%Inequalities between two vectors $x$ and $y$ in $\R^n$ are meant to hold true entry-wise, i.e., $x \le y$ means that $x_i\le y_i$ for every $i$, whereas $x< y$ means that $x_i< y_i$ for every $i$, and $x\lneq y$ means that $x_i\le  y_i$ for every $i$ and $x_j<y_j$ for some $j$ and analogous hold for the matrix. 
%The in-neighborhood of a node $i \in V$ is defined as $N_i = \{j \in V : (j, i) \in E\}$. %, while the in-degree of a node is defined as $d_i = \sum_{j \in V} A_{ij}$.

\section{Model description and analysis}\label{sec:model}
In this section, we present the adoption-opinion model, originally introduced in \cite{alutto2025mpc}, with a modified formulation at the physical layer. 

The model describes the diffusion of sustainable behaviors, such as electric vehicle adoption, within a closed population of total size $N$, assuming no inflow or outflow of individuals over time.  The population is partitioned into $n$ distinct communities, each identified by a pair of indices $(m,p)$, where $m$ denotes a mobility category and $p$ a socio-demographic characteristic. Each community is then associated with a joint distribution $f_{m,p}$, representing the proportion of individuals belonging to the corresponding group. The mobility index captures the extent to which individuals interact and are visible within the social network, thereby influencing the impact of their adoption decisions. At the same time, socio-demographic attributes, such as age or income, play a crucial role, as they reflect differences in openness to innovation, risk perception, but also responsiveness to peer influence. By jointly considering these two dimensions, the model incorporates heterogeneity in both social visibility and behavioral attitudes, yielding a more realistic representation of adoption dynamics.

Within each community $(m,p)$, individuals are further classified into three dynamic compartments. The variable $s_{m,p}(t)$ denotes the fraction of susceptible individuals at time $t \in \mathbb{Z}_+$, who have not yet adopted a given behavior or technology but may do so in the future, belonging to the community $(m,p)$. The fraction $a_{m,p}(t)$ represents those who are currently adopting the behavior or using the service, while $d_{m,p}(t)$ represents dissatisfied individuals who have previously adopted but abandoned the behavior due to dissatisfaction, or who perceive adoption as unfavorable and are unlikely to adopt in the next future. The three fractions together account for the entire community, so that their sum reflects the distribution $f_{m,p}$. This compartmental structure allows the model to capture both the spread of adoption and the dynamics of abandonment within heterogeneous communities. 
In addition, each community is associated with a dynamic opinion variable $x_{m,p}(t)\in[0,1]$, representing its average attitude toward the innovation or service, as the electric vehicle adoption, at time $t \in \mathbb{Z}_+$. 

%The model is structured on a two-layer network, illustrated in Figure~\ref{fig:layers}(a). The adoption layer describes the evolution of the three population fractions, driven by direct interactions between individuals and modulated by local opinions. Here, the probability of adoption rises as more members of one’s own or neighboring communities adopt, reflecting the increasing perceived quality, reliability, and maturity of the innovation. The opinion layer, by contrast, models how attitudes evolve through social interactions across communities, influenced both by peer effects and by the current level of adoption within each community. Importantly, these two layers may have different topologies, reflecting the distinction between physical interaction networks and broader opinion-exchange structures.
%\begin{figure}
%	\hspace{-0.3cm}
%	\subfloat[]{\includegraphics[scale=0.75]{./layers.eps}}
%	\subfloat[]{\includegraphics[scale=0.4]{./}}
%	\caption{(a) The bilayer network of the coupled adoption-opinion model. (b) Adoption model with three states and various transition parameters.}
%	\label{fig:layers}
%\end{figure}

\subsection{Adoption dynamics}
%We model the physical network as a finite weighted directed graph $\mc G=(\mc V, \mc E, W)$, where $\mathcal{V} = \{(m,p) : m \in M, p \in P\}$ is the set of nodes associated with each community, where $(m,p)$ represents a specific combination of mobility index $m$ and age range $p$. The set $\mathcal{E} \subseteq \mathcal{V} \times \mathcal{V}$ is the set of directed links, and $W \in \mathbb{R}_+^{n \times n}$ is a nonnegative weight matrix, known as the physical interaction matrix, such that $W_{(m,p),(m',p')} > 0$ if and only if there is a link $(m,p) \to (m',p') \in \mathcal{E}$. }%The neighborhood of a node $(m,p) \in \mathcal{V}$ is defined as $\mathcal{N}_{(m,p)} = \{(m',p') \in \mathcal{V} : ((m,p), (m',p')) \in \mathcal{E}\}$.%The adoption process among $n=25$ interacting communities is described by the following discrete-time equations. We assume that more mobile communities play a more prominent role in the process of diffusion, contributing more significantly to the spread of adoption across the population. This is modeled by weighting the influence of each community proportionally to its mobility level. The resulting dynamics are:
\begin{figure}
	\centering
	\includegraphics[scale=0.5]{./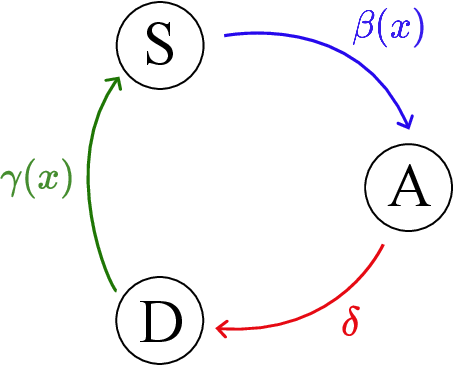}
	\caption{Schematic representation of the three-state adoption model. Individuals can transition between compartments through the indicated flows, describing the dynamics of adoption, abandonment, and potential re-adoption.}
	\label{fig:sirs}
\end{figure}

The adoption dynamics account for both local opinions and the aggregate influence of adopters across all communities. This influence is modeled not only through the prevalence of adopters but also by the mobility levels of both adopters and susceptibles. On the one hand, adopters from highly mobile groups tend to be more visible in daily interactions and therefore contribute disproportionately to the perceived prevalence of adoption. On the other hand, susceptible communities with higher mobility levels are more exposed to such interactions, and thus their propensity to adopt is amplified accordingly. Such a mechanism reflects empirical evidence that socially exposed or mobile individuals often act as multipliers or leaders in the diffusion of innovations, while at the same time being more likely to encounter persuasive adopters in their daily routines \cite{Rogers2010}. 
%The influence exerted by other communities is weighted proportionally to their mobility level $m$, reflecting the empirical evidence that more mobile individuals tend to have greater social visibility and influence.
The evolution of adoption dynamics across interacting communities is described by the following discrete-time system:
\begin{align}\label{eq:adoption-model}
s_{m,p}(t+1) &= s_{m,p}(t) +\gamma(x_{m,p}(t)) d_{m,p}(t) +\nonumber \\
& -\beta(x_{m,p}(t)) m s_{m,p}(t)\ds \sum_{m',\, p'} \frac{m'\, a_{m',p'}(t) }{ \sum_{m'',\, p''} m'' f_{m'', p''}}  \nonumber\\[1pt]
a_{m,p}(t+1) &= a_{m,p}(t) - \delta_{m,p}a_{m,p}(t) +\\
& + \beta(x_{m,p}(t))m  s_{m,p}(t) \ds \sum_{m',\, p'} \frac{m'\, a_{m',p'}(t) }{ \sum_{m'',\, p''} m'' f_{m'', p''}} \nonumber\\[1pt]
d_{m,p}(t+1) &= d_{m,p}(t) - \gamma(x_{m,p}(t))d_{m,p}(t) + \delta_{m,p} a_{m,p}(t)\,,\nonumber
\end{align}
for each community $(m,p)$ in $\mathcal{V} = \{(m,p) \mid m \in \mathcal{M},\, p \in \mathcal{P}\}$, where $\mc M$ and $\mc P$ denote the index sets for the mobility and socio-demographic characteristic, respectively. The \emph{adoption rate} $\beta(x_{m,p}(t))$ and the \emph{reversion rate} $\gamma(x_{m,p}(t))$ are both modeled as increasing functions of the local average opinion $x_{m,p}(t) \in [0,1]$. A more favorable opinion leads to higher adoption and reversion probabilities. For simplicity, in this work we will assume $\beta(x_{m,p}) = \beta x_{m,p}$, $\gamma(x_{m,p}) = \gamma x_{m,p}$ for some $\beta \in [0,1]$, $\gamma \in [0,1)$. 
Finally, the parameter $\delta_{m,p} \in [0,1]$ represents the \emph{dismissal rate}, that is the probability that an adopter in community $(m,p)$ becomes dissatisfied at the next time step.

Figure \ref{fig:sirs} shows the three-state adoption model where the transitions between different compartments are shown with arrows. Note that the adoption opinion model in \eqref{eq:adoption-model} differs from the model in \cite{alutto2025mpc} both in the absence of a transition from the susceptible state to the dissatisfied state and in the different modeling of adoption dynamics. Here, all adopters in the various communities influence the susceptible individuals in proportion to their mobility index, without specifying a physical layer formed by neighborhoods.

\subsection{Opinion dynamics}
%Each community $(m,p)$ is endowed with an opinion variable $x_{m,p}(t) \in [0,1]$, representing the average attitude of the community toward the innovation or service -- in our case, electric vehicle adoption. These opinions evolve over time through three concurrent mechanisms: initial predisposition, social influence, and behavioral feedback.
Social interactions are mediated by a network structure that reflects similarity in attitudes rather than geographical proximity.
To formalize this, we describe the system as a directed graph ${\mc G}= (\mc V, {\mc E}, W)$, where ${\mc E} \subseteq \mc V \times \mc V$ denotes the set of directed links and $W \in \R_+^{n \times n}$ is a nonnegative weight matrix, referred to as the social interaction matrix. 
Building on this, we extend the Friedkin-Johnsen framework of opinion dynamics~\cite{Friedkin1990}, by incorporating a feedback channel that links opinions to actual adoption levels:
\begin{align}\label{eq:opinion-model} 
x_{m,p}(t+1)\!\! =&\alpha_{m,p} x_{m,p}(0) \!+\! \lambda_{m,p} \mspace{-6mu} \ds \sum_{m',\, p'}\mspace{-6mu} {W}_{(m,p),(m',p')}x_{m',p'}(t) +\nonumber\\[2ex]
&\, + \xi_{m,p}  \mspace{-6mu} \ds \sum_{m',\, p'} \frac{m' a_{m',p'}(t)}{\ds \sum_{m'',\, p''} m'' f_{m'', p''}} \,.
\end{align}
The dynamics of opinions thus emerge from the interplay of three mechanisms: the persistence of individual predispositions, peer influence within social networks, and feedback based on the extent of adoption, again weighted by the adopters' degree of mobility.
The weights $\alpha_{m,p}$, $\lambda_{m,p}$, and $\xi_{m,p}$ are nonnegative and satisfy $\alpha_{m,p} + \lambda_{m,p} + \xi_{m,p} = 1$, determining the relative weight of the three drivers:
\begin{itemize}
\item the term $\alpha_{m,p} x_{m,p}(0)$ reflects the \textit{inertial tendency} of a community to preserve its initial belief;
\item the second term captures \textit{social influence}, where $\lambda_{m,p}$ scales the impact of neighboring communities in the similarity network;
\item the third term introduces a \textit{behavioral feedback mechanism}, whereby communities adjust their opinions in response to adoption levels, with higher mobility groups exerting stronger influence.  
\end{itemize}
The parameters $\lambda_{m,p}$ and $\xi_{m,p}$ modulate how responsive each community is to social and behavioral influences. A larger $\lambda_{m,p}$ indicates a stronger alignment with peers in the network, while a higher $\xi_{m,p}$ makes opinions more sensitive to the overall adoption levels in the population. This formulation closes the feedback loop between adoption and opinion: local attitudes shape adoption decisions \eqref{eq:adoption-model} and the resulting spread of adoption feeds back into opinions, either reinforcing or slowing down further diffusion.

\subsection{Coupled adoption-opinion model}
The adoption and opinion dynamics in \eqref{eq:adoption-model}-\eqref{eq:opinion-model} are intrinsically coupled.  
%Next assumption guarantees that the two underlying networks are connected.
%\begin{assumption}\label{ass:ass1}
%Assume that ${W}$ is row-stochastic and irreducible. 
%%Moreover, assume that $\gamma_{m,p}+ \theta_{m,p} \in (0,1)$ for all $(m,p) \in\V$.
%\end{assumption}
Let $s$, $a$, $d$, and $x$ denote the vectors containing, respectively, the fractions of susceptibles, adopters, dissatisfied individuals, and average opinions in each of the $n$ communities. For notational convenience, we introduce a single index $i \in {1,\dots,n}$ that uniquely identifies each community, corresponding to a specific pair $(m,p)$ defined by a mobility level and a socio-demographic group.

\begin{assumption}\label{ass:ass1}
	If, for any node $i\in\V$, there exists a path in ${\G}$ from $i$ to $j$ with $\lambda_j<1$ and $x_j(0)>0$.
\end{assumption}

We first establish a well-posedness result, whose proof is omitted for brevity.
\begin{proposition}\label{prop:invariant}
Consider the adoption-opinion model \eqref{eq:adoption-model}-\eqref{eq:opinion-model}. Then, if $s(0), a(0), d(0)$ in $[0,1]^{\V}$ and $s(0)+ a(0)+ d(0) = \1$, then $s(t), a(t),d(t)$ in $[0,1]^{\V}$ and $s(t)+a(t)+d(t)=\1$ for all $t\geq0$. Moreover, if $x(0)$ in $[0,1]^{\V}$, then $x(t)$ in $[0,1]^{\V}$ for all $t \geq 0$.
\end{proposition}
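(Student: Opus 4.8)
The plan is to proceed by induction on $t$, carrying the joint invariant that $s(t),a(t),d(t)\ge\0$, that $s(t)+a(t)+d(t)=\1$, and that $x(t)\in[0,1]^{\V}$; the base case $t=0$ is precisely the hypothesis. It is convenient to abbreviate the mobility-weighted adoption pressure shared by \eqref{eq:adoption-model} and \eqref{eq:opinion-model} by $\Phi(t):=\sum_{m',p'}m'a_{m',p'}(t)\big/\sum_{m'',p''}m''f_{m'',p''}$, and to record at the outset that $0\le\Phi(t)\le1$ under the inductive hypothesis.

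First I would establish conservation. Adding the three lines of \eqref{eq:adoption-model} for a fixed community, the reversion term $\gamma(x_{m,p})d_{m,p}$, the contagion term $\beta(x_{m,p})m\,s_{m,p}\Phi$, and the dismissal term $\delta_{m,p}a_{m,p}$ each occur once with a plus and once with a minus sign, so they cancel and leave $s_{m,p}(t+1)+a_{m,p}(t+1)+d_{m,p}(t+1)=s_{m,p}(t)+a_{m,p}(t)+d_{m,p}(t)$. Hence the per-community total is preserved and equals $1$ for every $t$.

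Next comes nonnegativity. For the adopter and dissatisfied compartments it is immediate, since $a_{m,p}(t+1)=(1-\delta_{m,p})a_{m,p}(t)+\beta x_{m,p}(t)\,m\,s_{m,p}(t)\Phi(t)$ is a sum of nonnegative terms ($\delta_{m,p}\in[0,1]$, $\Phi(t)\ge0$), and $d_{m,p}(t+1)=(1-\gamma x_{m,p}(t))d_{m,p}(t)+\delta_{m,p}a_{m,p}(t)\ge0$ because $\gamma x_{m,p}(t)\le\gamma<1$. The delicate compartment is the susceptible one, which I would rewrite as $s_{m,p}(t+1)=\big(1-\beta x_{m,p}(t)\,m\,\Phi(t)\big)s_{m,p}(t)+\gamma x_{m,p}(t)\,d_{m,p}(t)$: the second summand is nonnegative, so it suffices to show the bracket is nonnegative. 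Once all three compartments are known to be nonnegative, the upper bound by $1$ is automatic, because three nonnegative coordinates summing to one are each at most one.

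I expect the bracket inequality $\beta x_{m,p}(t)\,m\,\Phi(t)\le1$ to be the main obstacle, and the step where the normalization actually matters: one must combine $\beta\le1$ and $x_{m,p}(t)\le1$ with the bound $m\,\Phi(t)\le1$, the latter following from the scaling of the mobility weights together with the fact that the mobility-weighted mass of adopters cannot exceed $\sum_{m'',p''}m''f_{m'',p''}$. The opinion bound is then the softest step: by \eqref{eq:opinion-model}, $x_{m,p}(t+1)$ is a convex combination, with weights $\alpha_{m,p}+\lambda_{m,p}+\xi_{m,p}=1$, of $x_{m,p}(0)\in[0,1]$, of the network average $\sum_{m',p'}W_{(m,p),(m',p')}x_{m',p'}(t)\in[0,1]$ (using row-stochasticity of $W$, as in the Friedkin--Johnsen setting, together with $x(t)\in[0,1]^{\V}$), and of $\Phi(t)\in[0,1]$; a convex combination of quantities in $[0,1]$ stays in $[0,1]$, which closes the induction.
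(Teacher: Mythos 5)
The paper omits its own proof of Proposition~\ref{prop:invariant} (``whose proof is omitted for brevity''), so there is no official argument to compare against; your induction--conservation--nonnegativity scheme is the natural one, and several steps are unproblematic: the telescoping cancellation that preserves $s+a+d$, the nonnegativity of $a(t+1)$ and $d(t+1)$ from $\delta_{m,p}\in[0,1]$ and $\gamma x_{m,p}(t)<1$, the observation that nonnegative coordinates summing to one are each at most one, and the convex-combination argument for $x$ (modulo the fact that row-stochasticity of $W$ is standard in the Friedkin--Johnsen setting but is not obviously delivered by the paper's kernel construction of $W$).

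The genuine gap sits exactly at the step you yourself flag as the crux, and your sketch does not close it. You assert $0\le\Phi(t)\le 1$ ``under the inductive hypothesis'' and, later, that the mobility-weighted mass of adopters cannot exceed $\sum_{m'',p''}m''f_{m'',p''}$. But the invariant you carry is $s(t)+a(t)+d(t)=\1$, which only yields $a_{m',p'}(t)\le 1$, not $a_{m',p'}(t)\le f_{m',p'}$; the inequality $\sum_{m',p'}m'a_{m',p'}(t)\le\sum_{m'',p''}m''f_{m'',p''}$ requires the latter. (The paper's descriptive text normalizes the compartments to $f_{m,p}$, while the proposition and the vector form \eqref{eq:vector_model} normalize them to $\1$; you must commit to one, and under the $\1$-normalization one only gets $\Phi(t)\le \lVert\mathrm m\rVert_1/(\mathrm m^Tf)$, which is strictly greater than $1$ whenever more than one community has positive mobility.) Moreover, even granting $\Phi(t)\le 1$, you would still need $m\le 1$ to conclude $m\,\Phi(t)\le 1$: ``the scaling of the mobility weights'' is not an assumption stated anywhere in the paper (the mobility indices enter quadratically, as $m\,m'$, and in the data they label annual-kilometre classes). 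Consequently the inequality $\beta\,x_{m,p}(t)\,m\,\Phi(t)\le 1$ --- which is what keeps $s$ nonnegative \emph{and} what keeps the feedback term $\xi_{m,p}\Phi(t)$ in \eqref{eq:opinion-model} from pushing $x$ above $1$ --- does not follow from the stated hypotheses, and without an explicit extra condition (for instance $\beta\,(\max_i\mathrm m_i)\,\lVert\mathrm m\rVert_1\le \mathrm m^T f$, or a normalization convention on $\mathrm m$) the claimed invariance can actually fail. Identifying and proving that condition, or the normalization that makes it automatic, is the missing content of the proof.
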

As a direct consequence of Proposition~\ref{prop:invariant}, the adoption state can be fully described using just two variables (e.g., adopters and dissatisfied), together with the opinion vector.  
Let $\mathrm{m} \in \mathbb{R}^{\mc V}$ be the vector of mobility indices, and define the normalization constant
$$c = \frac{1}{\ds \sum_{m'',\, p''} m'' f_{m'', p''}}\,,$$
which rescales the aggregate influence of adopters, proportionally to their mobility level.  
We can then rewrite the model in compact vectorial form,
\begin{align}\label{eq:vector_model} 
a(t\!+\!1)\! \!&=\! %a(t) - \Delta a(t) +  \nonumber\\
%&\quad +C \beta \mathrm{diag}(x(t)) \mathrm{diag}(\1\! -\! a(t)\!-\! d(t)) \1 \mathrm{m}^T a(t) \nonumber \\
[I\!+\! c \beta \mathrm{diag}( \!x(t)\! )M \mathrm{diag}(\1\! -\! a(t)\!-\! d(t)\!) \1 \! \1^T\!\! M \!-\! \Delta] a(t) \nonumber \\
d(t\!+\!1)\!\!&=\! d(t) - \gamma \mathrm{diag}(x(t)) d(t)  + \Delta a(t) \\
x(t\!+\!1) \!\!&=\! (I-\Lambda-\Xi) x(0) + \Lambda W x(t) + c \Xi \1 \mathrm{m}^T a(t),\nonumber
\end{align}
where $\Delta =  \mathrm{diag}(\delta)$, $M= \mathrm{diag}(\mathrm{m})$, %$B =  \mathrm{diag}(\beta)$,  $\Gamma =  \mathrm{diag}(\gamma)$, %$\Theta =  \mathrm{diag}(\theta)$, 
$\Lambda=\mathrm{diag}(\lambda)$, $\Xi=\mathrm{diag}(\xi)$.

\subsection{Equilibrium points and stability analysis}
In this subsection, we investigate the equilibrium points of the system and we analysis their stability, following the analysis done in \cite{alutto2025mpc}. To this end, it is useful to introduce a threshold quantity that summarizes the potential of an innovation to spread, in close analogy with epidemic models.%, focusing on a specific instance of the general framework presented in \cite{alutto2025mpc}.  

Indeed, in epidemic modeling the \emph{basic reproduction number} plays this role. It quantifies the expected number of secondary cases generated by a single infectious individual in an fully susceptible population. In the context of behavioral diffusion, a similar threshold indicates whether a sustainable innovation, such as the adoption of electric vehicles, can spread widely across the population or remain confined to a small group of early adopters.
For the dynamics in \eqref{eq:vector_model}, we introduce the \emph{opinion-dependent reproduction number}:
\begin{equation*}
R_0^A(t) = \rho\left(I - \Delta + c \beta\, \mathrm{diag}(x(t)) \mathrm{m}\, \mathrm{m}^T \right)\,,
\end{equation*}
which evolves over time, since it explicitly depends on the opinion vector $x(t)$. This formulation captures how social beliefs dynamically influence the potential spread of adoption.

To analyze stability, we consider two extreme scenarios regarding public opinion.
First, assume that all communities strongly support adoption. Let $\bar{x} \in [0,1]^{\mathcal{V}}$ denote the vector of upper bounds of the opinion states $x(t)$ for all $t \ge 0$ which can be derived by the dynamics in \eqref{eq:opinion-model} as 
\be \label{eq:barx} x(t) \leq \bar x := (I-\Lambda-\Xi) x(0) + \Lambda \1 + \Xi \1,\ee 
for all $t \geq 0$. In this optimistic scenario, the system attains its maximal potential for behavioral diffusion, and we define the \emph{maximal reproduction number} as
\begin{equation}\label{eq:r0-max}
R_{0, \max}^A = \rho\left(I - \Delta + c\beta \, \mathrm{diag}(\bar{x})\mathrm{m} \, \mathrm{m}^\top \right),
\end{equation}
which provides an upper bound on the ability of the innovation to spread when attitudes are most favorable.
Conversely, consider a population in which most communities are predominantly skeptical or resistant. Let $\underline{x} \in [0,1]^{\mathcal{V}}$ denote the vector of lower bounds of the opinion states as follows
\be \label{eq:underx} x(t) \geq \underline x := (I-\Lambda-\Xi) x(0), \ee 
for all $t \geq 0$. In this pessimistic scenario, we define the \emph{minimal reproduction number}:
\begin{equation}\label{eq:r0_min}
R_{0, \min}^A = \rho\left(I - \Delta + c \beta \, \mathrm{diag}(\underline{x}) \mathrm{m}\, \mathrm{m}^\top  \right).
\end{equation}
This provides a lower bound on the system’s capacity to initiate and sustain adoption under unfavorable collective attitudes.
Thus, the two quantities $R_{0,\max}^A$ and $R_{0,\min}^A$ thus provide threshold conditions that will guide the subsequent stability analysis.  

We now introduce some auxiliary quantities used in the stability analysis. First define
\begin{equation}\label{eq:nu}
\nu := \min_{i\in\mc V} \left\{ \delta_i \right\}, \qquad
\eta := 1 - \gamma \inf_{t} \| x(t)\|_\infty ,
\end{equation}
and for a given equilibrium $(a^\dag, d^\dag, x^\dag)$, define
\begin{equation}\label{eq:varphi}
\varphi \!:= \mspace{-15mu} \max_{a,d,x \in [0,1]^{\mathcal{V}}}\! \left\| \! I \! -\!\! \Delta \!\! -\!\! \mathcal{B}^\dag \!\!+ \!\! c \beta M \mathrm{diag}(x) \mathrm{diag}(\1 \!-\! a\! -\! d)\! \1 \! \1^T \!  M \right\|_{\infty}\!\!,
\end{equation}
where $\mathcal{B}^\dag = c \beta \, \mathrm{diag}\left( M\mathrm{diag}(x(t)) \1 \1^\top M a^\dag \right)$. Finally, we introduce the matrix
\begin{equation}\label{eq:G} G:= \begin{bmatrix}
		\varphi & \sqrt{\sup_t \| \mc B^\dag\|_\infty \nu}\\
		\sqrt{\sup_t \| \mc B^\dag \|_\infty \nu} & \eta
\end{bmatrix}.\end{equation}

The following result, whose proof can be found in Appendix, presents a stability analysis of the system’s equilibrium points. In particular, we focus on two configurations: the \emph{adoption-free equilibrium}, in which no individuals have adopted the innovation, and the \emph{adoption-diffused equilibrium}, where a nonzero fraction of the population has adopted. This analysis characterizes the conditions under which adoption can emerge, persist, or die out in the network.
\begin{theorem}\label{theo:stability}
Consider the adoption-opinion model \eqref{eq:vector_model}. Then the following facts hold.
\begin{enumerate}
	\item[(i)] There exists an adoption-free equilibrium of the form $(\0, \0, x^*)$, where $x^* = (I - \Lambda W)^{-1} (I - \Lambda - \Xi) x(0)$.
	Moreover:
	\begin{itemize}
		\item[(a)] if $R_{0}^A (x^*)< 1$, the adoption-free equilibrium is locally stable,
		\item[(b)] if $R_{0}^A  (x^*)> 1$, the adoption-free equilibrium is unstable.
		\item[(c)] if $R_{0,\max}^A <1$, the adoption-free equilibrium is globally asymptotically stable.
	\end{itemize}
	\item[(ii)] If $R_{0}^A  (x^*)> 1$, then there exists at least one adoption-diffused equilibrium $(a^\dag, d^\dag, x^\dag)$ with $a^\dag > \0$. 
	If $\rho(G) <1$,
%	\item[(ii)] If $R_{0, \min}^A >1$, then there exists at least one adoption-diffused equilibrium $(a^*, d^*, x^*)$ with $a^* > \0$. Furthermore, if it also holds that
%		$$\beta C  \mathrm{m}^T \1 \leq \delta_i + \mathcal{B}^*_{ii}, \forall i \in \mc V$$
%		and there exist constants $\varsigma_1, \varsigma_2 > 0$ such that
%		\begin{align*}
%			\nu^2 + \frac{\varsigma_2 \nu^2}{1 - \eta^2} + \frac{\varsigma_1 \varphi^2}{1 - \varphi^2} &< \varsigma_1,  \\
%			\rho^2(\mc B^*) + \frac{\varsigma_1 \rho^2(\mc B^*)}{1 - \varphi^2} + \frac{\varsigma_2 \eta^2}{1 - \eta^2} &< \varsigma_2,
%		\end{align*}
	then the equilibrium is locally asymptotically stable for any non-zero initial condition.
\end{enumerate}
\end{theorem}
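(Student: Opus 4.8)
The plan is to establish existence of both equilibria by solving the fixed-point equations of \eqref{eq:vector_model} and then to analyze stability via the first Lyapunov method for the local statements and via monotone comparison / vector-Lyapunov arguments for the global and coupled ones. Starting with the adoption-free equilibrium in (i): setting $a=d=\0$, the first two equations of \eqref{eq:vector_model} are satisfied automatically, while the third reduces to $(I-\Lambda W)x^*=(I-\Lambda-\Xi)x(0)$. Under Assumption~\ref{ass:ass1} the nonnegative matrix $\Lambda W$ satisfies $\rho(\Lambda W)<1$, so $I-\Lambda W$ is an invertible $M$-matrix and $x^*=(I-\Lambda W)^{-1}(I-\Lambda-\Xi)x(0)$ is well defined, giving existence. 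For (i)(a)--(b) I would linearize \eqref{eq:vector_model} at $(\0,\0,x^*)$: because the adoption coupling carries an overall factor $\mathrm{m}^\top a$, its derivatives with respect to $d$ and $x$ vanish at $a=\0$, so the Jacobian is block lower-triangular in the ordering $(a,d,x)$ with diagonal blocks $J_{aa}=I-\Delta+c\beta\,\mathrm{diag}(x^*)\mathrm{m}\mathrm{m}^\top$, $J_{dd}=I-\gamma\,\mathrm{diag}(x^*)$, and $J_{xx}=\Lambda W$. Its spectrum is the union of the three block spectra, so that $\rho(J_{dd})\le1$ (strictly below $1$ once Assumption~\ref{ass:ass1} forces $x^*>\0$) and $\rho(J_{xx})<1$ are harmless, and the threshold is entirely governed by $\rho(J_{aa})=R_0^A(x^*)$: locally stable when $R_0^A(x^*)<1$, unstable when $R_0^A(x^*)>1$.

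For the global statement (i)(c) I would use a comparison argument exploiting nonnegativity. Since $a,d\ge\0$ gives $\mathrm{diag}(\1-a-d)\le I$ and $x(t)\le\bar x$ by \eqref{eq:barx}, nonnegativity of all factors yields the entry-wise bound $a(t+1)\le[I-\Delta+c\beta\,\mathrm{diag}(\bar x)\mathrm{m}\mathrm{m}^\top]a(t)$, hence $a(t)\le[I-\Delta+c\beta\,\mathrm{diag}(\bar x)\mathrm{m}\mathrm{m}^\top]^t a(0)$. When $R_{0,\max}^A<1$ the comparison matrix is Schur stable by \eqref{eq:r0-max}, so $a(t)\to\0$; feeding this into the $x$- and $d$-updates as a cascade (with $\rho(\Lambda W)<1$ and $1-\gamma x_i^*<1$) gives $x(t)\to x^*$ and $d(t)\to\0$, i.e., global asymptotic stability.

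The harder part is (ii). For existence when $R_0^A(x^*)>1$, I would reduce the equilibrium system to a single equation in $a^\dag$ by eliminating the other variables: the $d$- and $x$-equations give $d^\dag=\tfrac1\gamma\,\mathrm{diag}(x^\dag)^{-1}\Delta a^\dag$ and $x^\dag=x^*+c(I-\Lambda W)^{-1}\Xi\1\mathrm{m}^\top a^\dag$, and substituting into the $a$-equation produces a self-map of the compact convex feasible set of Proposition~\ref{prop:invariant}. A Brouwer fixed-point argument then yields a solution, while instability of the origin (guaranteed by $R_0^A(x^*)>1$ via part (i)(b)) rules out $a^\dag=\0$ and forces $a^\dag>\0$. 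For local asymptotic stability under $\rho(G)<1$, I would pass to error variables $(e_a,e_d,e_x)$ and bound the $(a,d)$-block by a two-dimensional comparison system: a mean-value estimate gives $\|e_a(t+1)\|_\infty\le\varphi\|e_a(t)\|_\infty+\sup_t\|\mathcal{B}^\dag\|_\infty\,\|e_d(t)\|_\infty$ and $\|e_d(t+1)\|_\infty\le\nu\|e_a(t)\|_\infty+\eta\|e_d(t)\|_\infty$, where $\varphi$ and $\eta$ from \eqref{eq:varphi}--\eqref{eq:nu} are uniform bounds over the feasible set that already absorb the dependence on $x(t)$. Since a $2\times2$ nonnegative matrix has spectral radius depending only on its trace and the product of its off-diagonal entries, replacing the two one-directional couplings by their geometric mean produces exactly the symmetric matrix $G$ of \eqref{eq:G} with the same spectral radius; thus $\rho(G)<1$ makes the comparison system a contraction, forcing $(e_a,e_d)\to\0$, and the stable $x$-cascade then gives $e_x\to\0$.

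The main obstacle will be twofold: making the fixed-point existence argument for the diffused equilibrium rigorous inside the invariant region — in particular controlling $\mathrm{diag}(x^\dag)^{-1}$ where $x^\dag$ may have small entries — and deriving the coupling estimates feeding $G$ with the correct uniform constants, since the $x$-feedback must be shown to be dominated by the $(a,d)$-contraction rather than tracked as an independent error. These two steps, and not the block-triangular linearization of part (i), are where the real work lies.
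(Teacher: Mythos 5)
Your strategy matches the paper's at every step where the paper actually gives one: the adoption-free equilibrium from the fixed-point equation with $\rho(\Lambda W)<1$; local (in)stability from the block-triangular Jacobian whose only critical block is $I-\Delta+c\beta\,\mathrm{diag}(x^*)\mathrm{m}\,\mathrm{m}^\top$ (here you are in fact more explicit than the paper, which omits the linearization); global stability under $R_{0,\max}^A<1$ from the monotone comparison $a(t+1)\le M(\bar x)\,a(t)$ followed by a cascade for $x$ and $d$ (the paper uses a left Perron eigenvector of $M(\bar x)$ and the explicit error variable $d_i-\delta_i a_i/(\gamma x_i)$, which is equivalent to your iteration-plus-vanishing-input argument); and, for the diffused equilibrium, the error system in $(e^a,e^d)$ with uniform $\|\cdot\|_\infty$ bounds and the trace/determinant observation that replacing the two off-diagonal couplings by their geometric mean preserves the spectral radius --- the paper realizes the same symmetrization through the diagonal rescaling of $e^d$ by $\sqrt{b/\nu}$. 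Your worry about the $x$-feedback entering the $(e^a,e^d)$ block is legitimate, and it is handled exactly as you propose: $\varphi$ in \eqref{eq:varphi} and $\sup_t\|\mathcal B^\dag\|_\infty$ are defined as suprema over the whole invariant set, so the time-varying blocks absorb the dependence on $x(t)$.

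The one step that would fail as written is the existence claim in part (ii). Applying Brouwer to the full compact convex set of Proposition~\ref{prop:invariant} only returns \emph{some} fixed point, and the origin is one of them; ``instability of the origin'' is a dynamical property that Brouwer does not see, so it cannot by itself rule out $a^\dag=\0$. The standard repair (and what the cited companion result carries out) is to use $R_0^A(x^*)>1$ to obtain a positive left eigenvector $w$ of $I-\Delta+c\beta\,\mathrm{diag}(x^*)\mathrm{m}\,\mathrm{m}^\top$ with eigenvalue exceeding one, show that for $\epsilon$ small enough the truncated region $\{w^\top a\ge\epsilon\}$ intersected with the feasible set is still mapped into itself, and apply Brouwer there; irreducibility of $\mathrm{m}\,\mathrm{m}^\top$ then upgrades a nonzero nonnegative $a^\dag$ to $a^\dag>\0$. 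With that step filled in, your outline is sound and coincides with the paper's argument.
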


The analysis above clarifies the conditions governing the existence and stability of equilibria in the coupled adoption-opinion system. Building on these theoretical results, we now turn to data-driven analysis, where model parameters are estimated and the proposed framework is confronted with survey evidence to assess its empirical validity.

\section{Data Analysis}\label{sec:data}
This section presents an analysis of data collected via a transnational survey that investigated knowledge and attitudes toward electric vehicles among Northern European consumers. 

An initial version of the survey was carried out in Denmark in 2018. The survey aimed to explore the public's knowledge of and attitudes toward electric vehicles, as well as to identify common beliefs and misconceptions. The same survey was later replicated in the Netherlands (December 2019), and again in Denmark, as well as in Germany, Hungary, and Norway (May 2020). The questionnaire consisted of 51 multiple-choice questions, divided into three sections: 9 questions addressed socio-demographic characteristics, 19 addressed mobility habits, and 23 addressed common myths about EVs. The survey was administered online to approximately 1500 respondents per country, aged between 18 and 70. The complete dataset is publicly available in \cite{mendeleyEVdata}. The overarching goal of this data collection effort was to compare perceptions and knowledge about EVs across countries with different levels of market maturity.

\subsection{Clustering socio-demographic data}
To identify the most representative socio-demographic variables, we use an unsupervised learning approach that does not rely on a predefined target variable. Specifically, we apply clustering techniques using the well-known \textit{K-Means} algorithm to detect natural groupings in data. K-Means is an iterative algorithm that partitions a dataset into $k$ clusters by minimizing the distance between each point and the centroid of its assigned cluster.
The clustering procedure was implemented using the \texttt{KMeans} method from the scikit-learn library \cite{scikitKMeans}.

A key step in clustering is selecting the number of clusters $k$. To guide this selection, we employ two standard techniques: the \textit{elbow method} and\textit{ silhouette coefficient analysis}. The elbow method evaluates the Within-Cluster Sum of Squares (WCSS) as a function of $k$, identifying the point at which adding more clusters yields diminishing reductions in WCSS. As shown in Figure~\ref{gomito} (left), the largest decrease occurs up to $k=5$, although the inflection point is not sharply defined. To complement this assessment, we compute the silhouette coefficient, which measures how well each data point fits within its cluster relative to other clusters. Higher silhouette scores indicate more compact and well-separated clusters. As illustrated in Figure~\ref{gomito} (right), the silhouette score reaches its maximum at $k=5$, which supports this as the most appropriate clustering configuration.

\begin{figure}
	\includegraphics[width=4.2cm,height=3.2cm]{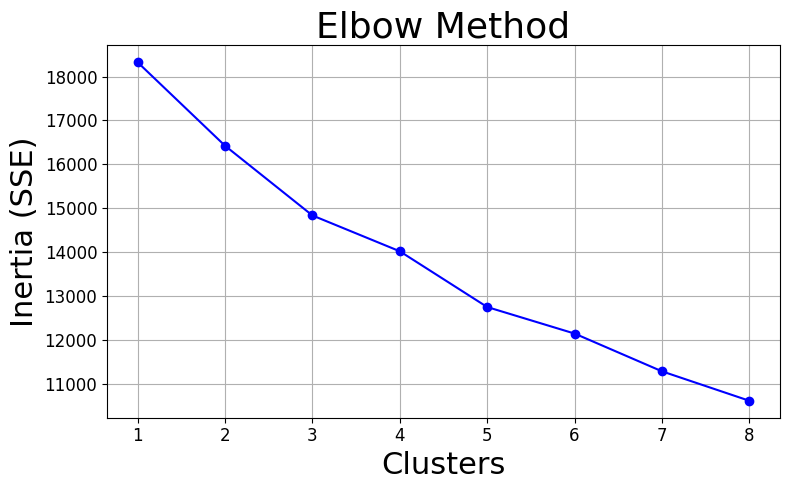}
	\includegraphics[width=4.2cm,height=3.2cm]{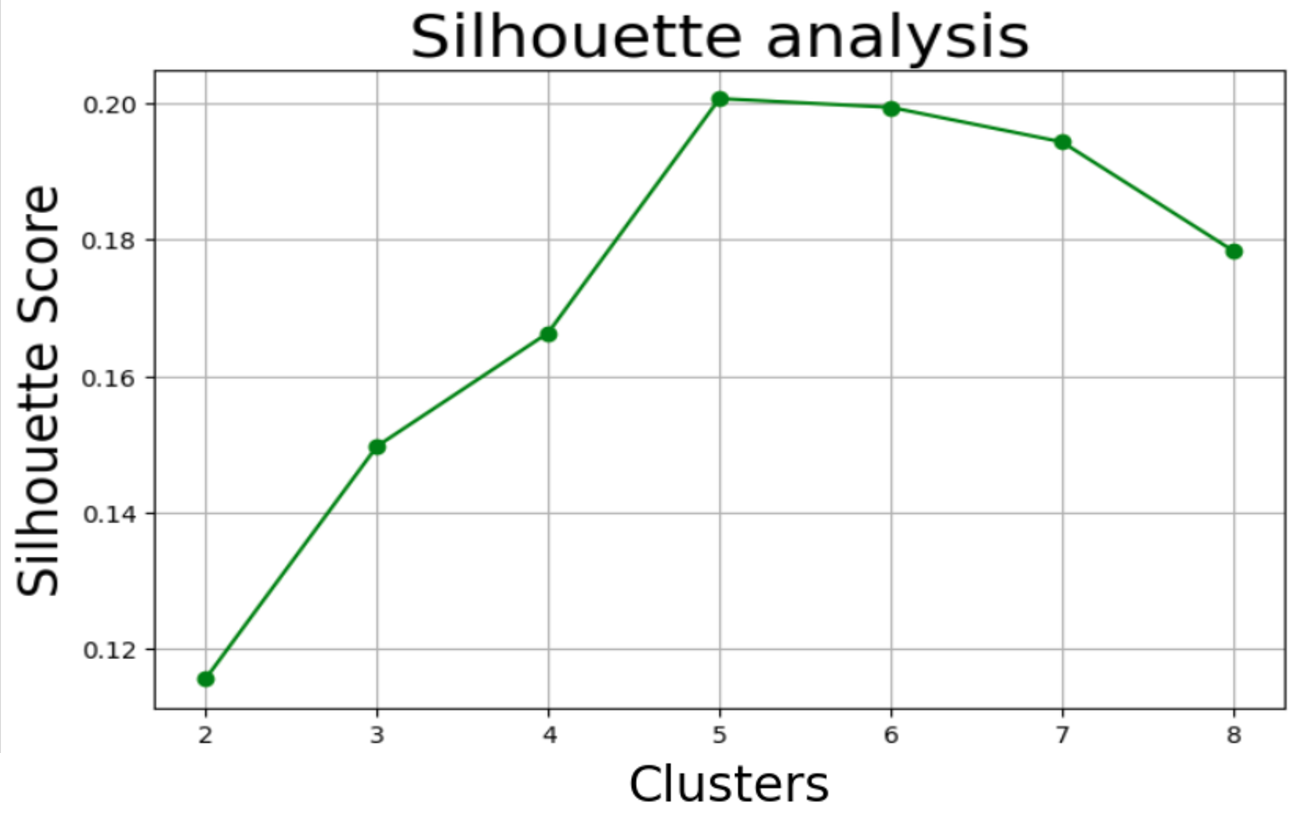}
	\caption{Determination of the optimal number of clusters for socio-demographic data. Left: Elbow method based on Within-Cluster Sum of Squares (WCSS). Right: Silhouette coefficient analysis. Both methods indicate $k=5$ as the most appropriate clustering configuration.}
	\label{gomito}
\end{figure}

Based on these results, we apply \textit{K-Means} with $k=5$ to the socio-demographic variables. Cluster labels are then assigned to each respondent, allowing us to characterize the main features of the identified groups. For each cluster, we compute the mean values of the socio-demographic variables and highlight those with the highest relative contributions. The resulting clusters can be summarized as follows:

\begin{itemize}
	\item \textit{Cluster 0}: 18-24 years old, predominantly secondary education, with left-wing or center political orientation, representing young adults likely beginning to engage with mobility choices.
	\item \textit{Cluster 1}: 45-54 years old, secondary education, left-wing and center political views, corresponding to mid-aged adults with established routines and moderate political engagement.
	\item \textit{Cluster 2}: 25-34 years old, high incidence of tertiary education, mostly left-wing political orientation, young professionals potentially more open to adopting innovative technologies.
	\item \textit{Cluster 3}: 55-70 years old, lower levels of education (secondary or below), and left-wing or center political alignment, older adults with potentially lower propensity for technological adoption.
	\item \textit{Cluster 4}: 35-44 years old, mixed education levels (secondary and tertiary), mostly left or center politically oriented, representing a mid-aged cohort with heterogeneous educational backgrounds and mobility behaviors.
\end{itemize}

Age emerges as the most discriminative factor across all clusters, shaping the distribution and heterogeneity of socio-demographic characteristics. To better visualize the separation and cohesion among the clusters, we apply Principal Component Analysis (PCA) to reduce the dimensionality, preserving most of the data's variance while enabling graphical representation. Figure~\ref{fig:enter-label} shows the resulting 3D projection of the socio-demographic clusters, highlighting the relative positioning and overlap of different population segments.

\begin{figure}
	\centering
	\includegraphics[width=0.7\linewidth]{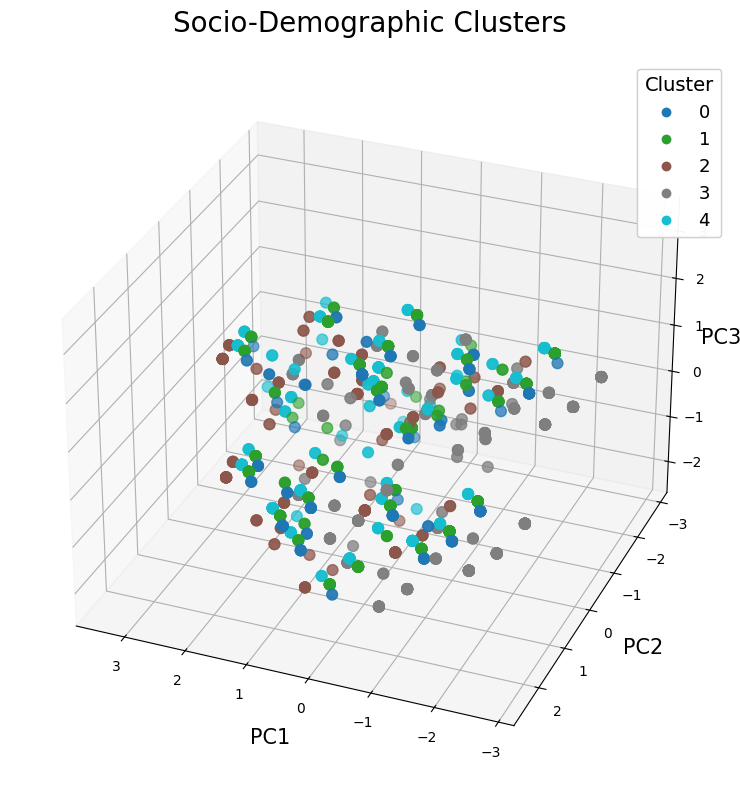}
	\caption{3D projection of socio-demographic clusters obtained via Principal Component Analysis (PCA). The five clusters identified by the K-Means algorithm are shown, highlighting the separation and cohesion of groups based on age, education, and political orientation.}
	\label{fig:enter-label}
\end{figure}

\subsection{Construction of distribution}
To create synthetic communities that capture both socio-demographic characteristics and mobility behaviors, we refine the classification of individuals further by segmenting each age group into five mobility subcategories, based on self-reported annual travel distances. This additional subdivision enables us to more accurately represent heterogeneity in mobility patterns, which is a key factor in modeling adoption processes. Individuals or groups with higher mobility are more likely to interact with a wider range of people, act as visible adopters of new technologies such as electric vehicles, and influence the perceptions and behaviors of the broader population. %Consequently, more mobile communities can play a disproportionately important role in shaping the diffusion of innovations, both through direct contacts and by increasing social visibility.

To formalize this concept, we compute the joint distribution $f_{m,p}$, which represents the fraction of individuals belonging to each mobility-age subgroup indexed by $(m,p)$. Here, $p \in \mathcal{P}$ denotes the age group, capturing generational differences in preferences, attitudes, and openness to innovation, while $m \in \mathcal{M}$ represents the mobility class, reflecting typical annual travel distance and exposure to diverse social contexts.
The age groups considered are: $\mathcal{P} = \{ \text{18--24}, \text{25--34}, \text{35--44}, \text{45--54}, \text{55--70} \}$, while the mobility range categories, derived from the survey data, are: $\mathcal{M} = \{ \text{$<$5.000}, \text{5.000-9.999}, \text{10.000-19.999}, \text{20.000-29.999}, \text{$\geq$30.000} \}$. This classification yields a total of $n = 25$ subpopulations per country.
Figure~\ref{fig:distributions} displays the empirical distributions $f_{m,p}$ for the five countries under study. In each subplot, the $x$-axis represents the age group $p$, the $y$-axis corresponds to the annual distance traveled (in km), while the vertical scale (represented by color intensity or bar height) indicates the relative size of each $(m,p)$ community within the national sample.

\begin{figure}
	\centering
	\subfloat[][]{\includegraphics[width=0.45\linewidth]{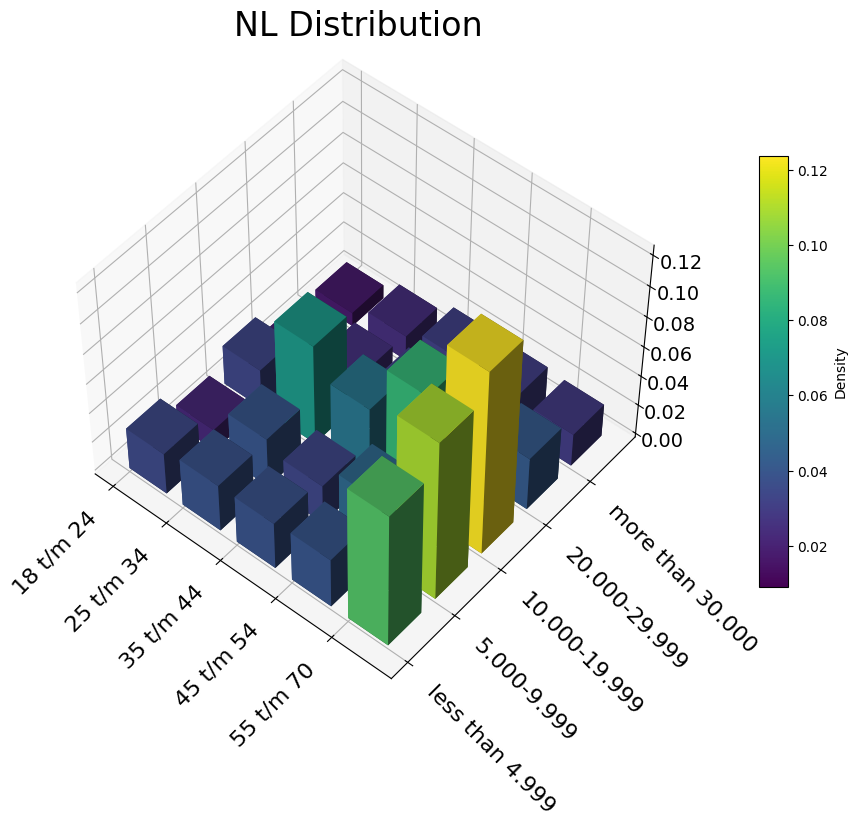}
	\label{fig:nl-year}
	}
	\subfloat[][]{\includegraphics[width=0.45\linewidth]{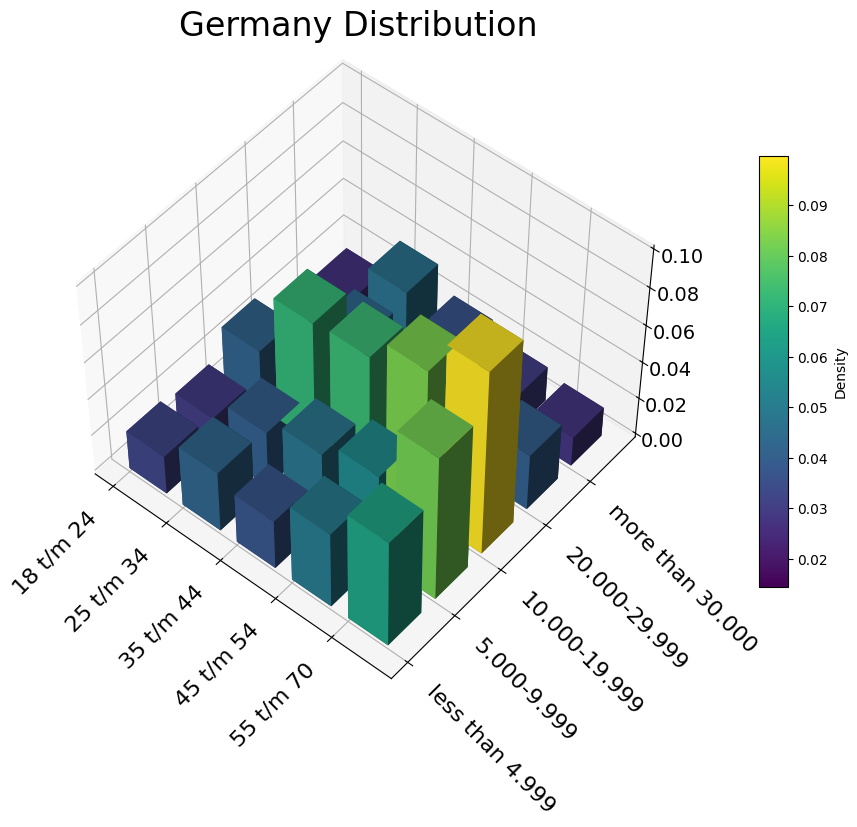}
		\label{fig:germany-year}
	}\\
	\subfloat[][]{\includegraphics[width=0.45\linewidth]{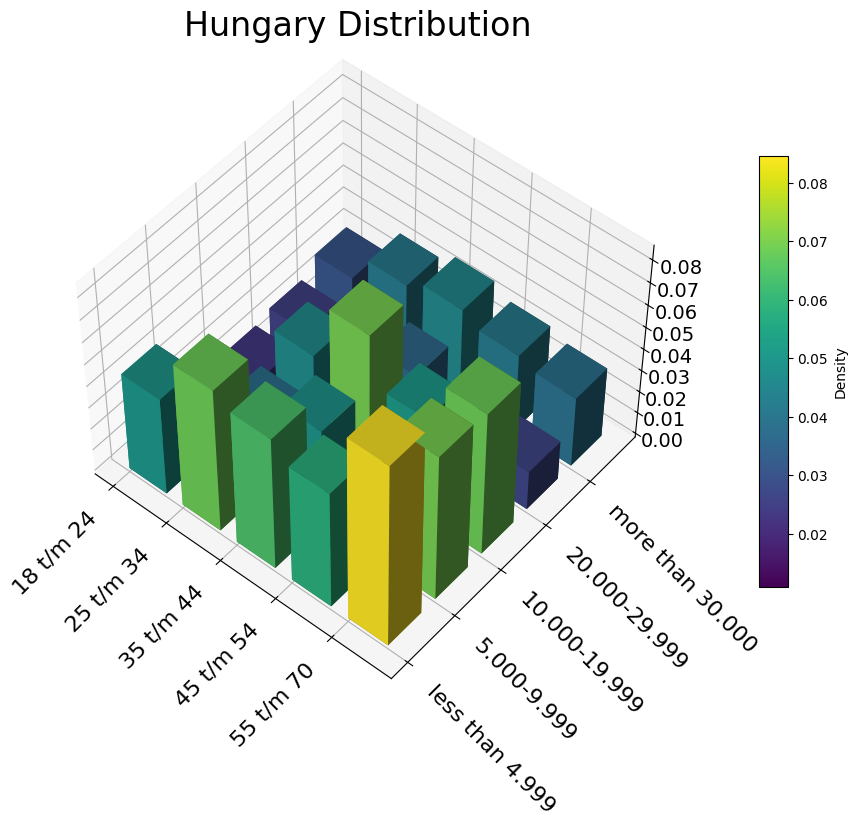}
		\label{fig:hungary-year}
	}
	\subfloat[][]{\includegraphics[width=0.45\linewidth]{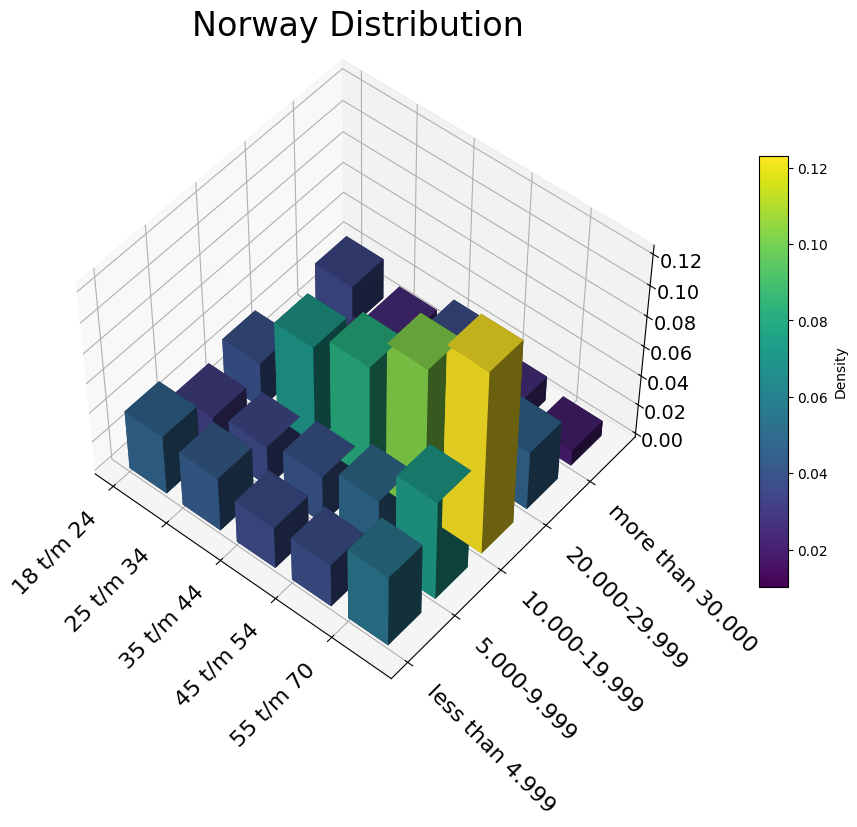}
		\label{fig:norway-year}
	}\\
	\subfloat[][]{\includegraphics[width=0.5\linewidth]{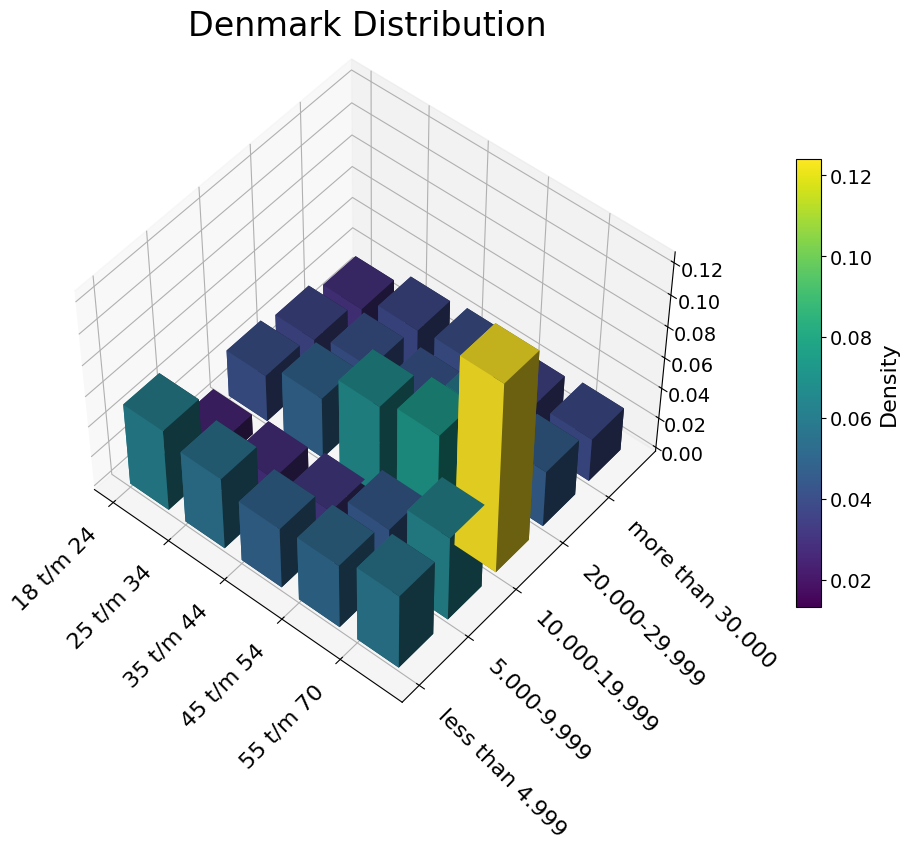}
		\label{fig:denmark-year}
	}
	\caption{Distribution of self-reported annual travel distances by age group in five countries. The $x$-axis represents age groups, the $y$-axis indicates annual distance traveled (km), and the $z$-axis shows the fraction of respondents in each group.}
	\label{fig:distributions}
\end{figure}
Significant heterogeneity is evident across countries. For instance, approximately $10\%$ of respondents in Norway and Germany are aged 45-54 and report driving between 10.000 and 19.999 km per year, while only a small fraction travel less than 5.000 km annually. By contrast, in Hungary, shorter annual distances are more common across all age groups. Denmark exhibits a different pattern: younger individuals, aged 18-34, tend to have shorter annual travel distances, whereas in most other countries, the least mobile groups are those over 55.
These differences highlight the importance of considering both age and mobility patterns when analyzing behavioral trends and adoption potential across countries.

\subsection{Social network construction}
The adoption of sustainable behaviors is influenced not only by mobility and socio-demographic characteristics, but also by the way communities influence one another.
Public opinions about electric vehicles evolve as communities continuously compare their views with those of others and with the observed adoption levels in their environment. To capture these mechanisms, we build a data-driven social network in which nodes represent socio-demographic communities and edge weights reflect the similarity of their attitudes toward EV adoption.

We start with survey data and select items that capture perceptions, beliefs, and concerns about EVs. These items cover three main dimensions: (i) common misconceptions (e.g., \textit{“EVs are boring”, “Not safe”, “Fire risk is higher”% “Hard to buy or repair,” “Charging is too slow,” “Not enough public chargers”
}), (ii) technological skepticism (e.g., \textit{“Battery issues”, “Too few models”, “Limited range or speed”, “Weather affects range” %“Too expensive,” “Too polluting to produce”
}), and (iii) doubts about compatibility with everyday life (e.g., \textit{“Only good for short trips”, “Only suitable as second car”, “Hard to travel across borders”}). Additional evaluative questions assess what people value when buying a car, which benefits could improve their perception of EVs, and how well electric vehicles fit into daily routines.

All responses are standardized so that higher values consistently indicate a more favorable orientation toward EVs, ensuring consistency across positively and negatively worded items. For each community $(m,p)$, defined by mobility index $m$ and age group $p$, we compute an aggregate opinion profile $x_{m,p} \in [0,1]^k$, where $k$ is the number of selected items. This profile will be used to construct a data-driven social network. Individual responses are averaged at the community level and then normalized to the interval $[0,1]$, making the resulting profiles comparable across groups with different response distributions. Non-informative answers such as \textit{"Don't know"} or \textit{"Neither agree nor disagree"}, usually encoded with special or midpoint values, are recoded to a neutral response (e.g., $3$ on a $1$-$5$ scale). This approach preserves valuable information while avoiding bias from ambivalent answers, thus ensuring statistical robustness. Collecting all profiles yields a matrix of opinions, with rows representing communities and columns corresponding to standardized survey items.

To translate opinion profiles into social interactions, we construct a weighted adjacency matrix $W$ where each entry represents the similarity between two communities in opinion space. This step allows us to define a notion of \textit{social proximity} which is independent of geographic location and instead reflects how similarly different subpopulations perceive EV adoption.
Specifically, the similarity between two communities $(m,p)$ and $(m',p')$ is defined by a Gaussian (RBF) kernel applied to the Euclidean distance between their opinion profiles $x_{m,p}$ and $x_{m',p'}$. Then, each entry $ W_{(m,p), (m',p')} $ can be constructed as follows:
\be  \begin{cases}
	 \exp \left( - \frac{\| x_{m,p} - x_{m',p'} \|_2^2}{2\sigma^2} \right)\!, & \text{if } \| x_{m,p} - x_{m',p'} \|\leq 0.9 \\
	0 & \text{ otherwise}
\end{cases}\ee
where $\| x_{m,p} - x_{m',p'} \|_2$ measures the dissimilarity in opinion space, and the parameter $\sigma > 0$ controls how rapidly the similarity decays with increasing opinion distance. 
This kernel-based approach is commonly used in Graph Signal Processing (GSP) \cite{8347162, 6409473, 6494675} to build weighted graphs in which edge weights encode the similarity between nodes. In our context, it defines a \textit{social network over the opinion space}: each population is a node, and the strength of the connection between two nodes reflects the alignment of their attitudes toward EVs. 
%Formally, we define a weighted, fully connected graph $\mathcal{G} = (\mathcal{V}, \mathcal{E}, W)$, where: $\mathcal{V} = \{(m,p) \mid m \in \mathcal{M},\, p \in \mathcal{P}\}$ is the set of $n = |\mathcal{M}||\mathcal{P}| = 25$ nodes, each corresponding to a unique mobility-age group, $\mathcal{E} \subseteq \mathcal{V} \times \mathcal{V}$ is the set of directed edges between communities and $W \in \mathbb{R}_+^{n \times n}$ is the similarity (adjacency) matrix whose entries quantify the strength of social interaction between communities based on opinion alignment.

The cutoff ensures that the kernel is applied only to pairs of communities with reasonably aligned opinions. Communities that are too far apart are considered socially disconnected, meaning they do not exert meaningful influence on each other. From a modeling perspective, this prevents the creation of an unrealistically dense network in which even very dissimilar groups would appear weakly connected. Instead, the resulting graph highlights the local structure of influence, which is more consistent with real-world social dynamics wherein interactions are stronger and more relevant among groups with similar attitudes.

The parameter $\sigma$ tunes the connectivity structure of this network: smaller values emphasize local similarities and yield sparser effective connections, while larger values lead to denser graphs that aggregate information across broader opinion differences. To calibrate $\sigma$ in a data-driven way, we use the median pairwise distance across all population pairs:
$$ \sigma = \text{median} \left( \left\{ \| x_{m,p} - x_{m',p'}\|_2 \mid (m,p)\ne (m',p') \right\} \right).
$$

This construction produces a social network that captures inter-community relationships shaped by shared or divergent beliefs about EVs, rather than by physical proximity. Figure~\ref{fig:heatmap} shows a heatmap illustrating the intensity of social interactions between communities in Germany.\medskip
\begin{figure}
	\centering
	\includegraphics[width=0.55\linewidth]{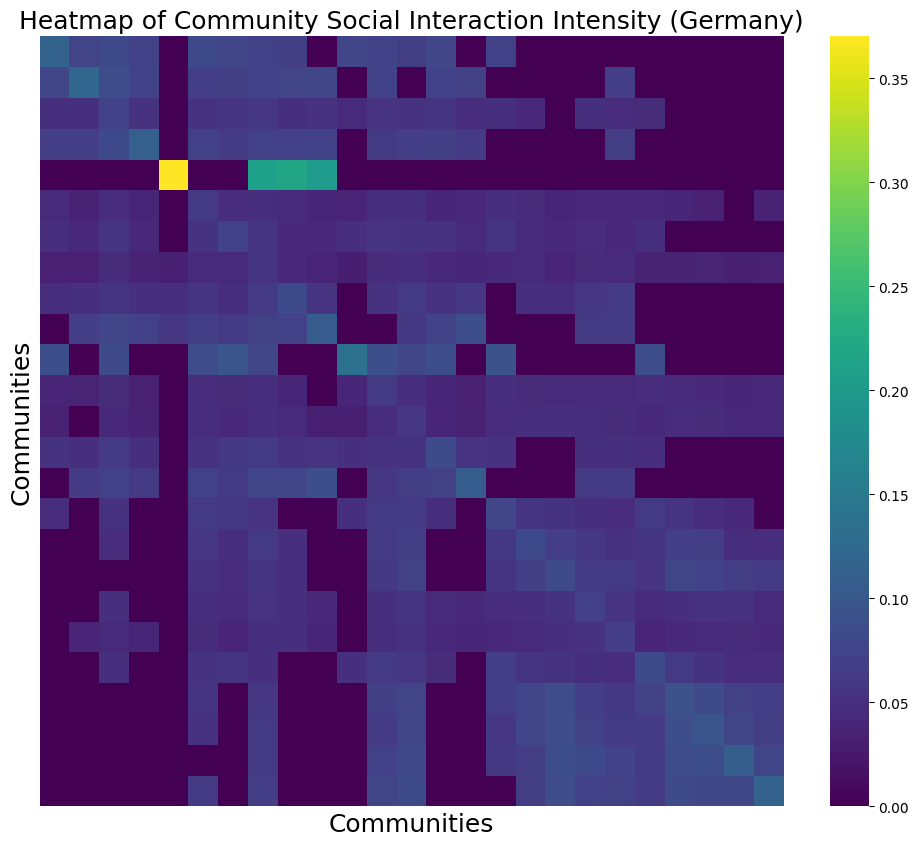}
	\caption{Heatmap of the social interaction matrix $W$ for Germany. Color intensity indicates the strength of opinion alignment between mobility-age communities, with lighter shades representing stronger social proximity.}
	\label{fig:heatmap}
\end{figure}

We now incorporate the insights obtained from the survey data into the model \eqref{eq:vector_model}. These insights include the community distribution $f_{m,p}$, the similarity matrix $W$ for the social network, and estimates of key parameters. This yields a data-driven version of the adoption-opinion model, that we can explore through numerical simulations to study its behavior under realistic conditions.

\section{Comparison between different targeted control strategies}\label{sec:comparison}
In this section, we use simulations to study how the coupled adoption-opinion system behaves under different scenarios. First, we examine the system's natural uncontrolled dynamics. Then, we evaluate the effects of various targeted interventions designed to promote the adoption of electric vehicles. We compare the interventions' relative efficacy and impact on different communities.

\subsection{Simulation Setting and Baseline Dynamics}
In the following analysis, we focus on the case of Germany, considering a total population of $N = 1078$ individuals, divided into $n = 25$ distinct communities. To facilitate interpretation, we report results as aggregate fractions across all communities, rather than community-specific values. The initial conditions for the simulations are derived from survey data. The initial fraction of adopters reflects the proportion of respondents who answered positively to the question: \emph{"Are there any electric cars or plug-in hybrid cars in your household?"}. This results in an initial adoption level of $19.4\%$ of the total population. The remainder is assumed to be susceptible, and no individuals are initially classified as dissatisfied. This setup represents a realistic scenario in which EV adoption is still limited and the population has had minimal exposure that could generate dissatisfaction.
We estimate the initial opinion distribution by determining the fraction of respondents in each community who agreed with the statement: \emph{"Society must reward electric cars instead of petrol and diesel cars"}. The average opinion across the network is approximately $0.537$. The dismissal parameter $\delta_{m,p}$ is also inferred from the survey data: for each community, it corresponds to the fraction of individuals who report having previously driven an EV but indicate they do not wish to own one in the present or future. This metric provides a community-level measure of disaffection. While, the adoption rate is set to $\beta = 0.01$ and the reversion rate to $\gamma = 0.02$. These values are chosen to produce a realistic progression of the uncontrolled system, capturing the natural evolution of adoption and the emergence of dissatisfaction over time without external intervention. For the opinion dynamics, the weights of the individual components are randomly selected for all communities, ensuring a balance between peer influence and behavioral feedback.
%This reflects a balance between peer influence and behavioral feedback, ensuring that opinions are sensitive enough to social interactions and observed adoption levels to allow for meaningful co-evolution of the coupled adoption-opinion system in the simulations.

Figure~\ref{fig:nocontrol} shows the resulting dynamics in the absence of external intervention. In this uncontrolled scenario, the opinions vector converges to $x^*$ such that $R_{0}^A(x^*) = 1.32>1$. 
%\MA{While, the matrix $G$, defined as in \eqref{eq:G}, becomes 
%\begin{equation}G= \begin{bmatrix}
%		\varphi & \sqrt{\sup_t \| \mc B^\dag\|_\infty \nu}\\
%		\sqrt{\sup_t \| \mc B^\dag \|_\infty \nu} & \eta
%	\end{bmatrix}.\end{equation}
%with $\rho(G) = <1$, it follows from Theorem \ref{theo:stability} that }
The system stabilizes at a positive equilibrium with a final fraction of adopters of approximately $0.06$, while about $0.077$ of the population becomes dissatisfied. The average final opinion at the equilibrium is around $0.37$, indicating a decreased value with respect to the initial condition. These results suggest that, without policy incentives or other external support, adoption remains very limited, and dissatisfaction naturally emerges over time. This baseline scenario underscores the challenge of promoting EV adoption and the potential effectiveness of targeted control strategies.\medskip

\begin{figure}
	\centering
	\includegraphics[width=0.35\textwidth]{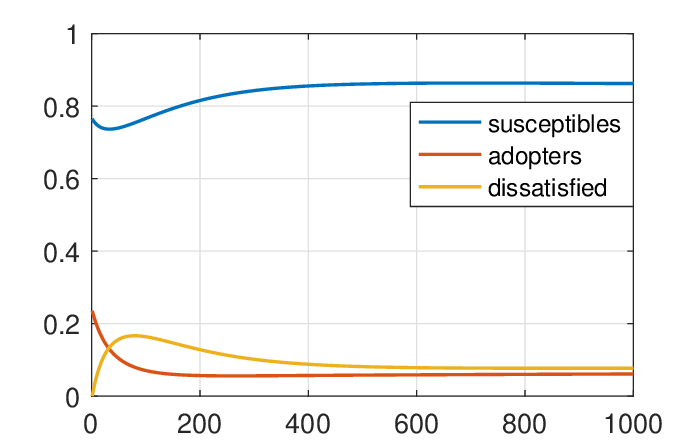}
	\caption{Numerical simulation of the aggregate uncontrolled dynamics \eqref{eq:vector_model} with initial condition derived from the survey data.}
	\label{fig:nocontrol}
\end{figure}

\begin{figure}
	\centering
	\hspace{-0.45cm}
	\subfloat[][]{\includegraphics[width=0.45\columnwidth]{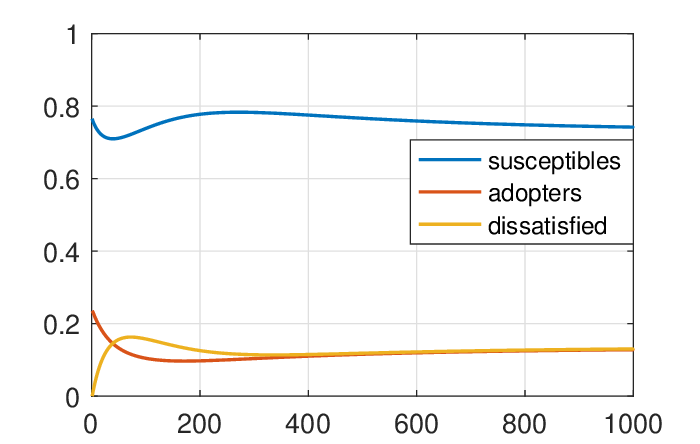}}
	\subfloat[][]{\includegraphics[width=0.45\columnwidth]{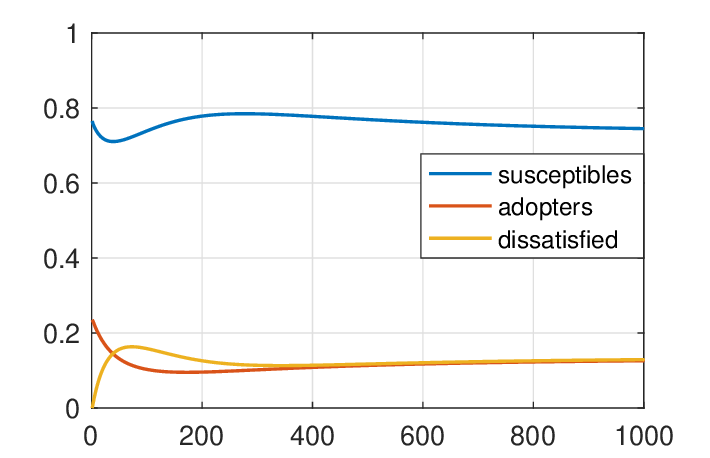}}\\
	\hspace{-0.45cm}
	\subfloat[][]{\includegraphics[width=0.45\columnwidth]{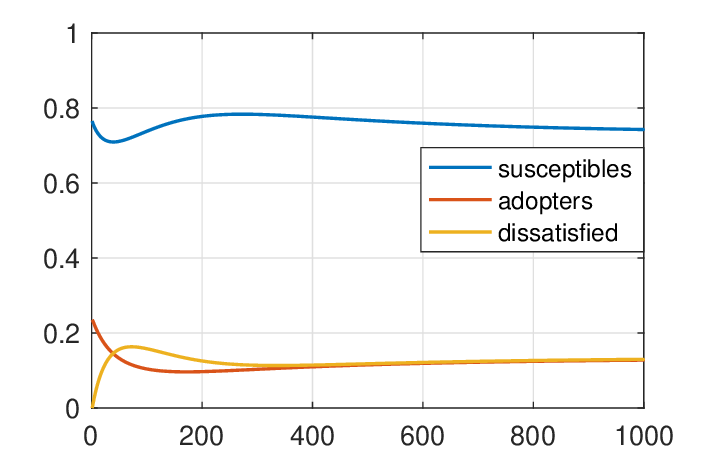}}
	\subfloat[][]{\includegraphics[width=0.45\columnwidth]{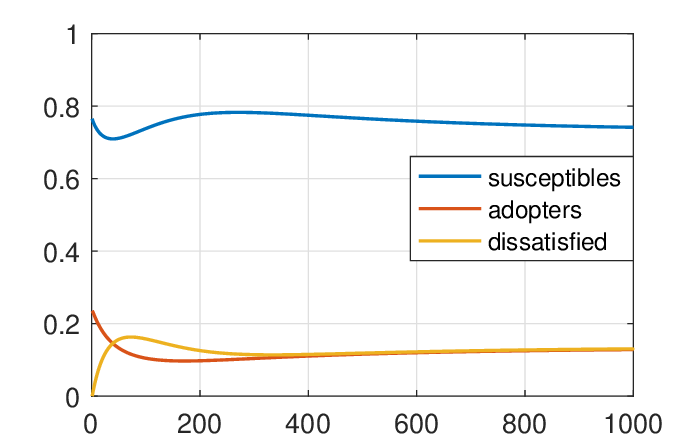}}
	\caption{Numerical simulation of the aggregate controlled dynamics with \eqref{eq:opinion-control}. In Plot (a), the budget allocation rule is dependent only on the relative size of each community, while in Plot (b), the distribution is weighted by the mobility index. In Plot (c) and (d) the budget allocation rule relies on centrality measures: in-degree and PageRank, respectively.}
	\label{fig:control2}
\end{figure}

\subsection{Opinion-based control policies}
One natural way to promote the adoption of electric vehicles is to influence opinion dynamics because social attitudes strongly affect individuals’ willingness to adopt new technologies. Targeted interventions, such as informational campaigns, educational programs, and public communications, can shift perceptions in favor of innovation, creating a positive feedback that amplifies adoption. Empirical studies confirm this effect: for example, in \cite{filippini2021nudging} the authors found that informational nudges on costs and environmental impacts significantly influence purchase intentions, while the work in \cite{li2020analysis} shows that popular science campaigns can increase the initial share of favorable attitudes in consumer networks. These results underscore the effectiveness of shaping beliefs and perceptions in steering collective behavior.
Motivated by these insights, we explore targeted control strategies designed to accelerate EV adoption by influencing the social layer. The idea is to modify the opinion update process to steer communities toward more favorable attitudes, thereby indirectly boosting adoption through the coupled adoption-opinion dynamics. This interplay between the opinion and behavioral layers is essential because shifting collective perceptions can trigger broader adoption cascades that propagate through social influence and mobility-driven interactions.
Formally, we implement opinion-based interventions by modifying the opinion update rule as follows:
\be \label{eq:opinion-control} x(t+1) = (I - \Lambda - \Xi)(x(0) + u) + \Lambda {W} x(t) + C \Xi \1 \mathrm{m}^T a(t),\ee
where $u$ represents the applied control vector. We assume a total available budget $U $ to be distributed across the population. 

A natural baseline allocation is proportional to community size, 
\be\label{eq:rule1}u_j = U f_j,\quad \forall j \in \mc V ,\ee
where $f_j$ is the relative size of community $j$. This reflects the idea of distributing resources evenly per capita across the entire population, without favoring specific groups. Another principle is to weight allocations by mobility, 
\be \label{eq:rule2}u_j = U\frac{ \mathrm{m}_j f_j}{\sum_k \mathrm{m}_k f_k}, \quad \forall j \in \mc V\ee
thus prioritizing communities with higher mobility. 

The rationale is that more mobile individuals are not only more visible in society but also more likely to shape social norms and accelerate the diffusion of new behaviors. %In this scenario, budget is progressively assigned to the most mobile communities until it is exhausted.
Figures~\ref{fig:control2}(a)-(b) compare the resulting dynamics under the two allocation strategies, assuming a fixed budget equal to $n$. In both cases, the fraction of adopters stabilizes around $0.128$, with a similar final fraction of dissatisfied individuals. These results suggest that weighting the intervention by mobility does not significantly affect the adoption-dissatisfaction balance compared to the simpler size-proportional allocation. The similarity in outcomes can be attributed to the global nature of the network interactions, which dilute the impact of mobility-based reweighting, leaving the overall equilibrium essentially unchanged.
It is important to note that these comparisons are qualitative rather than quantitative: the goal is to highlight the relative tendencies of different allocation rules under a fixed budget. Overall, the analysis shows that opinion-based interventions can increase adoption relative to the uncontrolled baseline, although they may introduce a trade-off with higher dissatisfaction.

An alternative strategy for budget allocation relies on centrality measures derived from the similarity matrix $W$. In Figures~\ref{fig:control2}(c)-(d) illustrate two examples: in-degree centrality, which quantifies how much a given node is influenced by others \cite{Freeman1978CentralityIS}, and PageRank, a more refined metric that also considers the importance of the influencing neighbors \cite{Page1999ThePC}. Using either measure, the budget is distributed proportionally to the centrality values of the nodes. This approach yields a final fraction of adopters and dissatisfied individuals of approximately $0.129$ in the first case and $0.13$ in the second one, values slightly higher than in the previous cases, yet overall comparable in magnitude.
This outcome suggests that, in this network, in-degree and PageRank centrality identify nearly the same subset of structurally important nodes. In particular, nodes that are highly exposed to external influence also tend to attain high PageRank scores. Consequently, concentrating resources on these nodes produces analogous effects on both adoption and dissatisfaction levels.

To further examine the evolution of the population’s attitude, Figure~\ref{fig:control-op} reports the average opinion dynamics resulting from the different control strategies. As shown, the strategies illustrated in Figures~\ref{fig:control2}(a), (c), and (d) yield an average equilibrium opinion of approximately $0.68$, whereas the allocation rule that prioritizes communities with higher mobility leads to a slightly lower final value of about $0.66$. Overall, these results indicate a clear improvement in the collective attitude relative to the initial condition, confirming a positive effect of the implemented control interventions on opinion formation.
\begin{figure}
	\centering
	\includegraphics[width=0.3\textwidth]{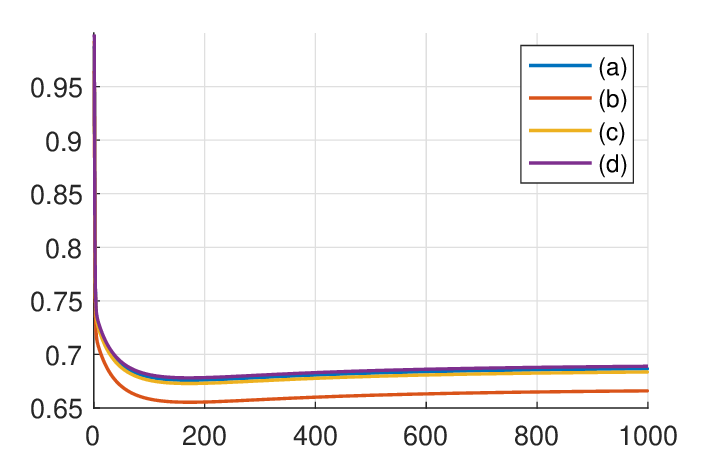}
	\caption{Numerical simulation of the aggregate controlled opinion dynamics \eqref{eq:opinion-control} under different allocation rules.}
	\label{fig:control-op}
\end{figure}
%However, since $W$ is a similarity matrix, symmetric and with uniform row sums, these centrality measures turn out to be identical for all nodes. This reflects the fact that, in our model, all communities are equally connected in terms of opinion influence. Consequently, applying the intervention based on centrality does not significantly increase the equilibrium adoption level compared to the uncontrolled scenario. This result highlights the limitations of purely opinion-based strategies and underscores the need to explore complementary control actions that act directly on adoption dynamics or combine multiple levers of influence.

\subsection{Dissatisfaction-based policies}
Another way to influence adoption is to act directly on the experience of those who adopt. Minimizing dissatisfaction is key to sustaining adoption over time. If negative experiences accumulate, individuals may abandon the innovation, hindering its long-term diffusion \cite{rezvani2015advances}. Improving product quality, providing reliable customer service, and enhancing user experience are all measures that reduce dissatisfaction. 
Inspired by these findings, we model this control strategy through a modified dissatisfaction rate:
\be \label{eq:delta-control}\delta^u_i := \delta_i (1 - u_i(t)).\ee

We performed a similar analysis to the one presented earlier for the opinion-based control, but under a smaller total budget fixed at $0.75n$. The reason is that interventions aimed at improving the user experience, such as better interfaces, increased reliability, and post-adoption support, are usually more expensive per individual than information campaigns. Thus, a smaller budget is more realistic in this context.
Figures~\ref{fig:control4}(a)-(b) illustrate the controlled scenarios under two baseline allocation rules: proportional to community size \eqref{eq:rule1} and weighted by mobility \eqref{eq:rule2}. 
In the first case, the system converges to a final adoption fraction of approximately $0.66$, with dissatisfaction stabilizing around $ 0.15$. When mobility is considered, adoption further increases to $0.69$, with dissatisfaction slightly decreases to $0.126$. From a high-level perspective, both strategies outperform opinion-based interventions, suggesting that interventions directly addressing dissatisfaction produce a stronger and more sustainable impact on overall adoption levels.
Figures~\ref{fig:control4}(c)-(d) show the results when the budget is instead allocated according to centrality measures derived from the similarity matrix $W$. Targeting nodes with higher in-degree centrality yields a final adoption fraction of $0.78$ and a dissatisfaction fraction of $0.1$. When PageRank centrality is employed, these values further improve to $0.8$ and $0.1$, respectively. These results highlight a significant improvement compared to both the uncontrolled baseline and the earlier opinion-based interventions.

\begin{figure}
	\centering
	\hspace{-0.45cm}
	\subfloat[][]{\includegraphics[width=0.45\columnwidth]{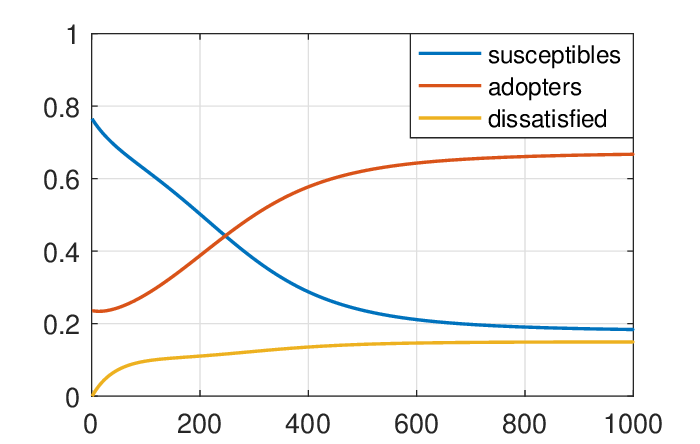}}
	\hspace{-0.35cm}
	\subfloat[][]{\includegraphics[width=0.45\columnwidth]{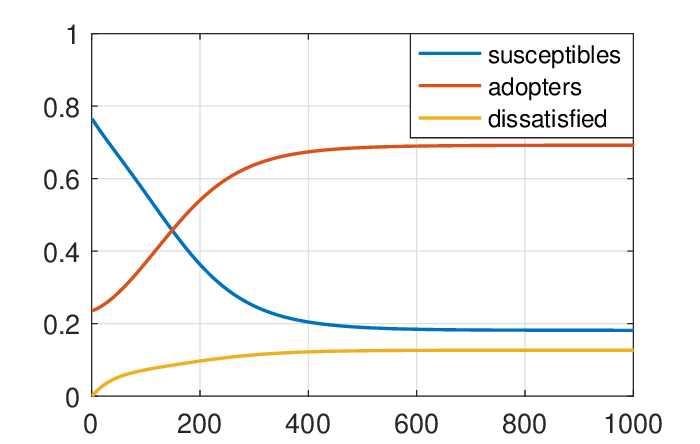}}\\
	\hspace{-0.45cm}
	\subfloat[][]{\includegraphics[width=0.45\columnwidth]{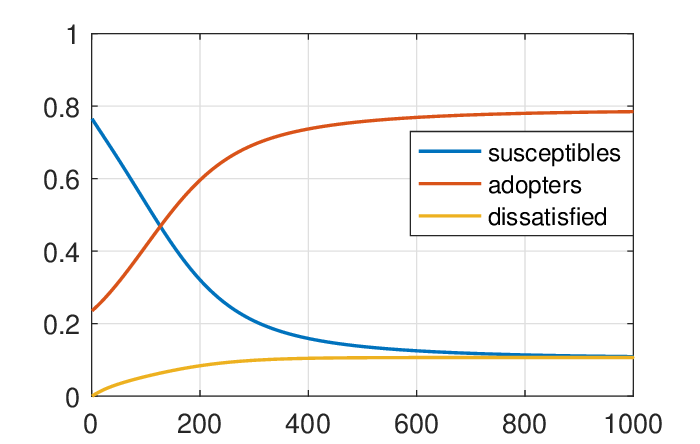}}
	\hspace{-0.35cm}
	\subfloat[][]{\includegraphics[width=0.45\columnwidth]{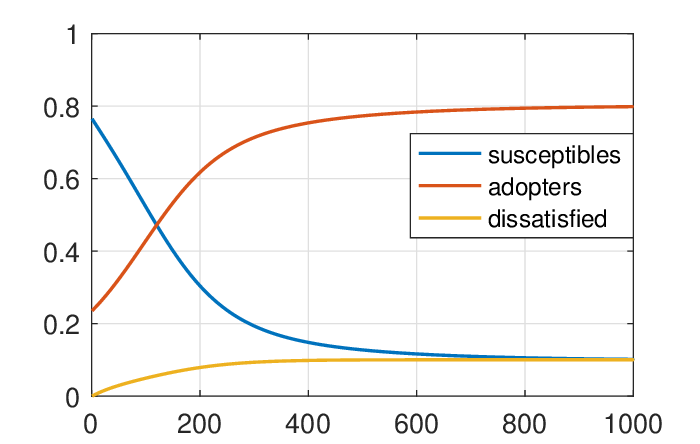}}
	\caption{Numerical simulation of the aggregate controlled dynamics with \eqref{eq:delta-control} as modified dissatisfaction rate. In Plot (a), the budget allocation rule is dependent only on the relative size of each community, while in Plot (b), the distribution is weighted by the mobility index. In Plot (c) and (d) the budget allocation rule relies on centrality measures: in-degree and PageRank, respectively.}
	\label{fig:control4}
\end{figure}

As in the previous analysis, we also examine how these control strategies affect the evolution of opinions. Figure~\ref{fig:control-op2} displays the average opinion dynamics corresponding to the different allocation rules. In this case, the control strategies shown in Figures~\ref{fig:control4}(b), (c), and (d) yield a higher average equilibrium opinion compared to the allocation rule proportional solely to community size \eqref{eq:rule1}. In all scenarios, the interventions lead to an overall improvement in the population’s attitude with respect to the initial condition.
\begin{figure}
	\centering
	\includegraphics[width=0.3\textwidth]{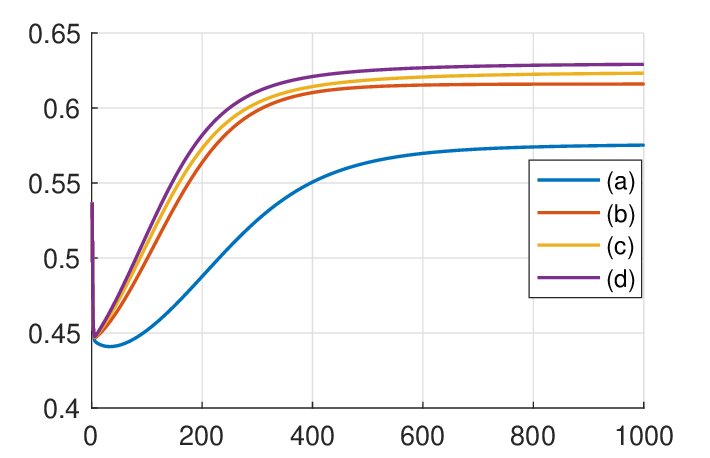}
	\caption{Numerical simulation of the aggregate controlled opinion dynamics with \eqref{eq:delta-control} as modified dissatisfaction rate under different allocation rules.}
	\label{fig:control-op2}
\end{figure}

Overall, these findings suggest that improving the quality of the adoption experience is more effective than merely shaping opinions. Although opinion-based interventions can trigger adoption, they often lead to increased dissatisfaction. In contrast, dissatisfaction-based interventions lead to higher final adoption levels and sustain diffusion more robustly over time. Furthermore, in the second control scenario, we observe that targeting based on mobility improves efficiency and that leveraging network centrality amplifies this effect, ensuring resources focus on the most influential communities. This layered perspective suggests that combining improvements in user experience with informed allocation rules is the most promising way to maximize long-term diffusion while keeping low dissatisfaction levels.\medskip

\section{Conclusion}\label{sec:conclusion}
In this paper, we proposed a data-driven computational framework for studying the diffusion of sustainable behaviors, with a focus on electric vehicle adoption. We built a synthetic population and a social network capturing inter-community influence by integrating survey-based opinion data with mobility and socio-demographic features. We then analyzed a coupled adoption-opinion model to explore both spontaneous dynamics and the impact of different interventions.
Our results show that individual decisions are only slightly influenced by promotional or awareness campaigns. Rather, adoption dynamics are primarily shaped by personal experience and perceived satisfaction. Indeed, although policies that act only on opinion dynamics can accelerate diffusion, they also tend to exacerbate dissatisfaction. In contrast, measures that improve user experience generate positive feedback that sustains adoption over time. Overall, these findings suggest that transitioning to sustainable technologies requires more than persuasive communication; it demands concrete improvements in user experience and systemic support to reinforce satisfaction and trust.

Future work could extend this framework by incorporating richer behavioral feedback mechanisms, exploring dynamic policy adaptation, and integrating additional data sources to refine the social network structure. These extensions would further enhance the model’s predictive power and its utility for designing effective, evidence-based policies to promote sustainable innovation.

\bibliographystyle{IEEEtran}
\bibliography{bib}

\appendix
%\begin{proof}[Proof of Proposition \ref{prop:invariant}]
%	We show the well-posedness of the coupled model \eqref{eq:adoption-model}-\eqref{eq:opinion-model} by induction. 
%	Suppose that at time $t$ we have $s(t)+ a(t)+ d(t) = \1$. From \eqref{eq:adoption-model}, we get
%	$$s_j(t+1)+ a_j(t+1)+ d_j(t+1) = s_j(t)+ a_j(t)+ d_j(t) = 1$$
%	for all $j$. Then $s(t), a(t),d(t) \in [0,1]^{\V}$ for all $t\geq0$. 
%	Suppose now that a time $t$, $x(t) \in [0,1]$, then from \eqref{eq:opinion-model},
%	$$x(t+1) \leq (I-\Lambda-\Xi) x(0) + \Lambda \1 + \Xi \1 \leq \1\,,$$
%	and 
%	$x(t+1) \geq (I-\Lambda-\Xi) x(0) \geq \0\,,$
%	then $x(t+1) \in [0,1]^{\V}$.
%\end{proof}\medskip

\begin{proof}[Proof of Theorem \ref{theo:stability}]
	(i) The proof can be partially adapted from \cite[Theorem 1]{alutto2025predictivecontrolstrategiessustaining}. With respect to the model in \cite{alutto2025predictivecontrolstrategiessustaining}, here we have $\theta_j = 0$, thus this implies the adoption-free equilibrium reduces to $(\0, \0, x^*)$ with $x^* = (I - \Lambda W)^{-1} (I - \Lambda - \Xi) x(0)$.
	
	Points (a) and (b) can be deduced by the linearization analysis in this simplified setting. 
	
	Consider the matrix $M(x) = I - \Delta +c\beta \mathrm{diag}(x) \mathrm{m}\, \mathrm{m}^T$, which is irreducible, nonnegative and monotonically increasing with respect to $x$. The adopters' dynamics can then be upper-bounded as
	$ \label{eq:eq1}a(t+1) \leq M(\bar x) a(t)\,.$
	Since the maximal reproduction number satisfies $R_{0,\max}^A = M(\bar x) <1$, then by the Perron-Frobenius theorem, there exists a nonnegative vector $v\neq0$ and a constant $\lambda\in(0,1)$ such that $v^TM(\bar x)= \lambda v^T$. Multiplying both sides of \eqref{eq:eq1} by $v^T$ gives
	$$ v^Ta(t+1) \leq v^T M(\bar x) a(t) \leq  \lambda v^T a(t),$$
	which implies $a(t)$ asymptotically converges to $\0$ for any initial state $a(0)$ in $[0,1]^{\V}$.	Note that from Assumption \ref{ass:ass1} it follows that $\Lambda W$ is a Schur stable matrix (see Lemma 5 in \cite{FRASCA2013212}) and the opinions vector $x$ asymptotically converges to $ (I- \Lambda W)^{-1} (I-\Lambda-\Xi) x(0)$, since it is the unique asymptotically stable equilibrium of the resulting Friedkin-Johnsen model.
	It remains to show that $d(t)$ asymptotically converges to $\0$. 
	Defining the error
	$$e_i(t) =d_i(t) - \frac{\delta_i a_i(t)}{ \gamma x_i(t)}\,,$$
	for all $i \in \V$, we get
	$ e (t+ 1) =U(t)e (t) + b(t)\,$
	with $U(t)= \big( I- \gamma \mathrm{diag}(x(t))\big)$ and 
	\begin{align*}
		b_i (t)& =\! \left[ \frac{\delta_i}{\gamma x_i(t)} \!+\! \frac{-\! \delta_i (1 \!-\! \delta_i)}{\gamma x_i(t+ 1)} \right]\!\! a_i(t) \\[3pt]
		&+ \frac{-\! \delta_i (1 \!-\! a_i(t) - d_i(t))}{\gamma x_i(t+ 1)} c \beta m_i x_i(t)\!\! \sum_{j \in \mc V} \! m_j a_j(t) \,.
	\end{align*}
	We have $\|U(t)\|_\infty \leq 1-\min_i(\gamma x_i) <1$. Since $a(t)$ asymptotically converges to $0$ and $x(t)$ is also convergent to $x^*$, it follows that $\|b(t)\|_\infty$ vanishes as $t \to \infty$. Thus proving Point (c).\\

	(ii) The proof of the existence of an equilibrium point $(a^\dag, d^\dag, x^\dag)$ with $a^\dag>\0$ can be adapted from \cite[Theorem 2]{alutto2025predictivecontrolstrategiessustaining}. 
	
	Let us prove now the stability of  such equilibrium point. Define the error variables $e^a_i(t) = a_i(t)- a_i^\dag$ and $e_i^d(t) = d_i(t) - d_i^\dag$. Then, 
	{\small{\begin{align*}
				\mspace{-5mu}e_i^a(t+1)\! &\!= \!a_i(t) + c \beta x_i(t) m_i (1\!-\! a_i(t)\!-\! d_i(t))\!\! \sum_{j \in \mc V}\! m_j a_j(t) +\\ 
				&\mspace{12mu}- \delta_i a_i(t) - a_j^\dag \\
				& \!=\! e_i^a(t) + c\beta x_i(t) m_i (1 -a_i^\dag -d_i^\dag) \!\! \sum_{j \in \mc V} \! m_j (e_j^a(t) + a_j^\dag) +\\ 
				& \mspace{12mu} - \! c \beta x_i(t) m_i \! (e_i^a(t)\!+\!e_i^d(t))\!\! \sum_{j \in \mc V}\! m_j \! (e_j^a(t) \!+ \! a_j^\dag) \!-\! \delta_i (e_i^a(t) \!+\! a_i^\dag) \\
				%&= (1-\delta_i)e_j^a(t) +\beta_i x_i(t)(1 - e_j^a(t)-i_j^* - e_j^d(t)-r_j^*) \sum_{j \in \mc N_i} W_{ij} (e_k^a(t) + i_k^*)- \delta_i i_j^*\\
				%			&= (1-\delta_i)e_j^a(t) +\beta_i x_i(t)(1 - e_j^a(t)-i_j^* - e_j^d(t)-r_j^*) \sum_{j \in \mc N_i} W_{ij} e_k^a(t) +\\
				%			&\quad - \beta_i x_i(t)(e_j^a(t)+ e_j^d(t)) \sum_{j \in \mc N_i} W_{ij} i_k^* \\
				&\!= \Big(1-\delta_i-  c \beta x_i(t) m_i \sum_{j \in \mc V} m_j a_j^\dag\Big)e_i^a(t) +\\
				&\mspace{12mu}+c \beta x_i(t)m_i (1 - a_i(t) - d_i(t)) \sum_{j \in \mc V}m_j e_j^a(t) +\\
				&\mspace{12mu}-c \beta x_i(t) m_i e_i^d(t) \sum_{j \in \mc V} m_j a_j^\dag .
	\end{align*}}}
	While
	{\small	\begin{align*}
			e_i^d(t+1) &= d_i(t+1) - d_i^\dag \\
			&= e_i^d(t) \!+\! \delta_i a_i(t) \!-\! \gamma_i x_i(t) d_i(t)  \\
			&= e_i^d(t) + \delta_i (e_i^a(t) + a_i^\dag) - \gamma_i x_i(t) (e_i^d(t) + d_i^\dag) \\
			&= (1 - \gamma_i x_i(t)) e_i^d(t) + \delta_i e_i^a(t).
	\end{align*}}
	Therefore, we can rewrite the systems for $e^a$ and $e^d$ in the following compact form 
	\begin{equation}\label{eq:system2}
		\begin{bmatrix}
			e^a(t+1) \\
			e^d(t+1)
		\end{bmatrix} = \begin{bmatrix}
			F_{11}(t) & F_{12}(t) \\
			F_{21}(t) & F_{22}(t)
		\end{bmatrix}\begin{bmatrix}
			e^a(t) \\
			e^d(t)
		\end{bmatrix} 
	\end{equation}
	where
	\begin{align*}
		F_{11}(t) &=  I- \Delta - \mathcal{B}^\dag +  c \beta M \mathrm{diag}(x) \mathrm{diag}(\1 \!-\! a\! -\! d)\! \1 \! \1^T \!  M, \\
		F_{12}(t) &= -\mc B^\dag, \\
		F_{21}(t) &=  \Delta, \\
		F_{22}(t) &=I - \Gamma \mathrm{diag}(x(t)).
	\end{align*}
	
	In order to study the stability of system~\eqref{eq:system2}, we first analyze the infinity norms of the block matrices appearing in its dynamics, and we show that they are uniformly bounded in time.
	In particular, by the definition in \eqref{eq:varphi}, and since $x(t),a(t),d(t)\in [0,1]^{\V}$, we have that for all $t \geq 0$, $\varphi=  \sup_t \|F_{11}(t)\|_{\infty} < 2$. 
	Similarly, we have $\eta = \sup_t \|F_{22}(t)\|_\infty < 1$ and $\nu= \sup_t \|F_{21}(t)\|_\infty < 1$. Moreover, let $b := \sup_t \|F_{12}(t)\|_\infty$, and note that, from Proposition \ref{prop:invariant}, we have  $b<1$.\\
	From the system \eqref{eq:system2}, we can write
	\begin{align}
		\| e^a(t+1)\|_\infty &= \| F_{11}(t) e^a(t) + F_{12}(t) e^d(t)\|_\infty \nonumber\\
		&\leq \varphi \|e^a(t)\|_\infty + b\|e^d(t)\|_\infty, \label{eq:bound1}
	\end{align}
	and 
	\begin{align}
		\| e^d(t+1)\|_\infty &= \| F_{21}(t) e^a(t) + F_{22}(t) e^d(t)\|_\infty\nonumber \\
		&\leq \nu \|e^a(t)\|_\infty + \eta \|e^d(t)\|_\infty. \label{eq:bound2}
	\end{align}
	Let us now define the auxiliary variable 
	$$g(t):= \begin{bmatrix} \|e^a(t)\|_\infty \quad \sqrt{ \tfrac{b}{\nu}} \|e^d(t)\|_\infty \end{bmatrix}^\top.$$
	Considering the bounds in \eqref{eq:bound1} and \eqref{eq:bound2}, we get the following auxiliary system:
	\begin{equation}\label{eq:aux-system}g(t+1) \leq \,G \, g(t),\end{equation}
	where $G$ is defined in \eqref{eq:G}. Note that matrix $G$ is nonnegative and symmetric, hence, by the Perron-Frobenius theorem, its spectral radius $\rho(G)$ coincides with its induced $\infty$-norm. Therefore,
	$$\| g(t+1)\|_\infty \leq \|G\|_\infty \| g(t)\|_\infty  = \rho(G)  \| g(t)\|_\infty.$$
	Since $\rho(G)<1$, then $g(t)$ goes to $ 0$ exponentially as $t$ tends to $\infty$. Consequently, both $\|e^a(t)\|_\infty$ and $\|e^d(t)\|_\infty$ converge exponentially to zero, and thus the system \eqref{eq:system2} is exponentially stable.
	
	Note again that from Assumption \ref{ass:ass1} it follows that $\Lambda\tilde{W}$ is a Schur stable matrix (see Lemma 5 in \cite{FRASCA2013212}) and the opinions vector $x$ asymptotically converges to $x^\dag = (I- \Lambda W)^{-1} \big((I-\Lambda-\Xi) x(0) + c \Xi \1 \mathrm{m}^T a^\dag)$, since it is the unique asymptotically stable equilibrium of the resulting Friedkin-Johnsen model.
\end{proof}

\begin{IEEEbiography}[{\includegraphics[width=1in,height=1.25in,clip,keepaspectratio]{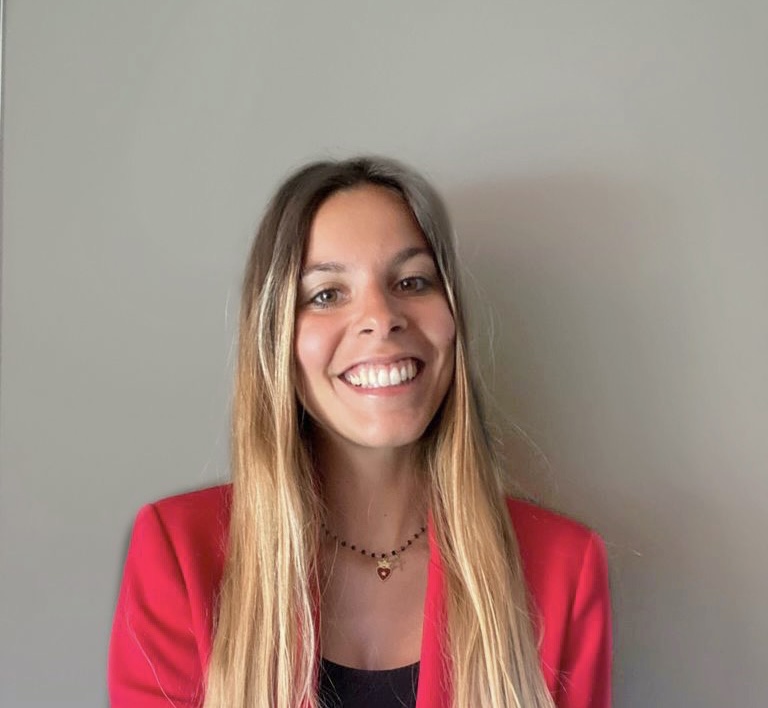}}]{Martina Alutto} received the B.Sc.~and the M.S.~(cum laude) in Mathematical Engineering from  Politecnico di Torino, Italy, in  2018 and 2021, respectively and the PhD~(cum laude) in Pure and Applied Mathematics at the Department of Mathematical Sciences, Politecnico di Torino, Italy. She was a Research Assistant at the National Research Council (CNR-IEIIT), Torino, Italy. She is currently a Postdoctoral Researcher with the Royal Institute of Technology, Stockholm, Sweden. She was Visiting Student at Cornell University, Ithaca, NY in 2023. Her research interests focus on analysis and control of network systems, with application to epidemics and social networks.
\end{IEEEbiography}
\begin{IEEEbiography}[{\includegraphics[width=1in,height=1.25in,clip,keepaspectratio]{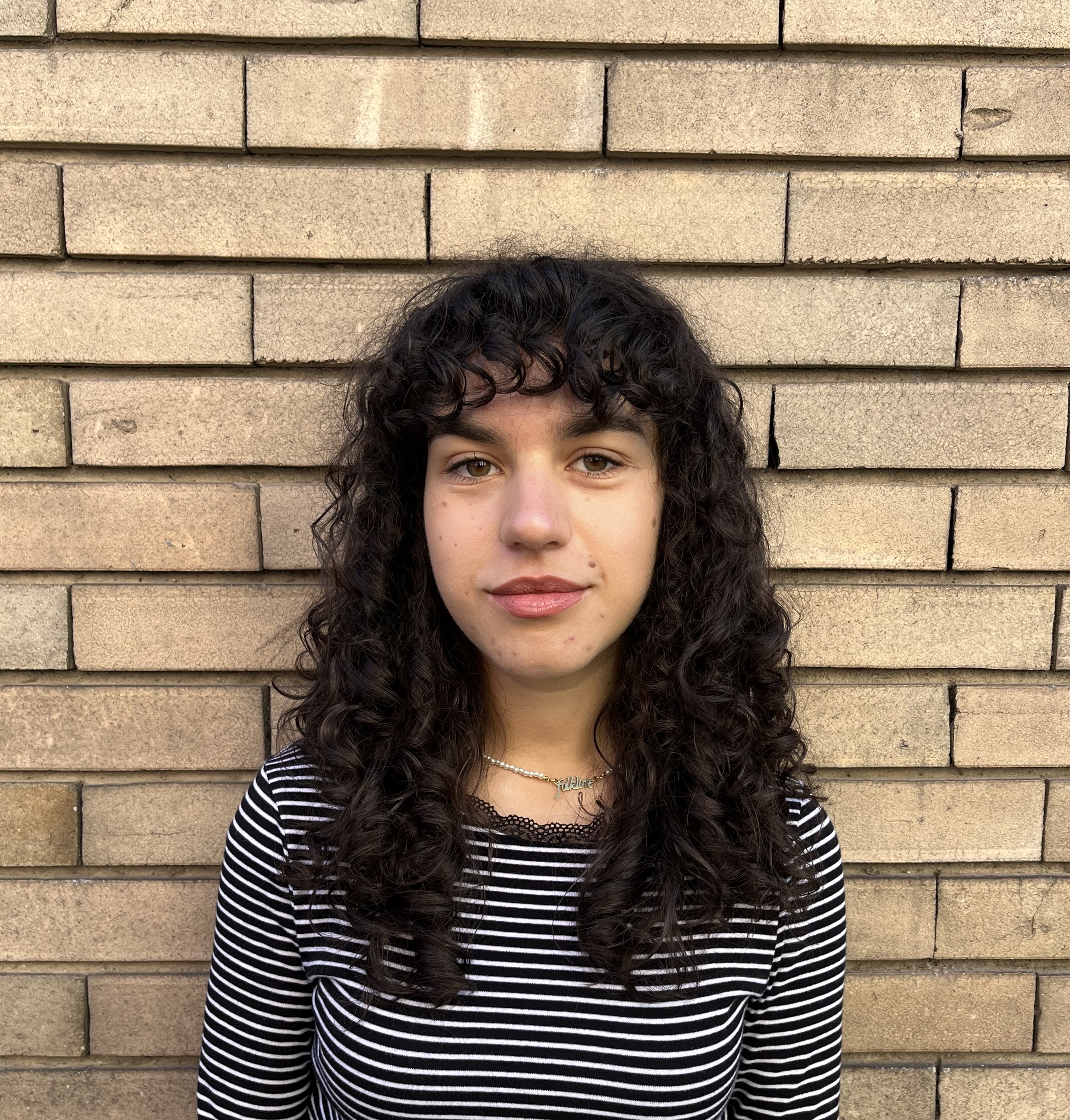}}]{Sofia Bellotti} received the B.Sc. degree in Mathematics for Engineering from Politecnico di Torino, Turin, Italy, in 2023. She spent a semester at Centrale Supélec, Paris, France, one of the leading Grandes Écoles d’Ingénieurs. She is currently pursuing the M.Sc. degree in Mathematical Engineering at Politecnico di Torino and preparing her master’s thesis at ENS Paris-Saclay, Paris, France, on conformance checking for stochastic and time-aware process models. Her research interests include data analysis, optimization, and machine learning.
\end{IEEEbiography}
\begin{IEEEbiography}[{\includegraphics[width=1in,height=1.25in,clip,keepaspectratio]{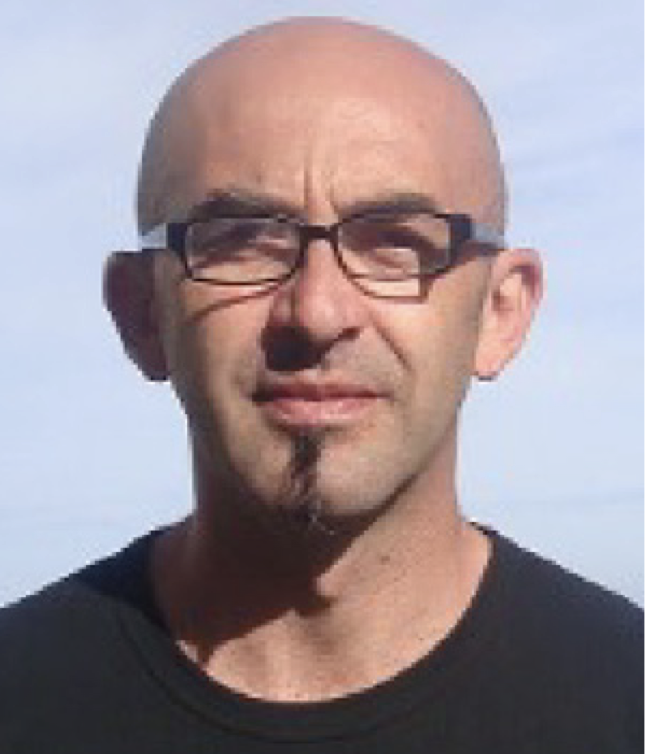}}]{Fabrizio Dabbene} is a Director of Research at the institute IEIIT of the National Research Council of Italy (CNR), where he coordinates the Information and Systems Engineering Group. He has held visiting and research positions with The University of Iowa, Penn State University, and the Russian Academy of Sciences, Institute of Control Science, Moscow, Russia. He has authored or coauthored more than 150 research papers and two books. Dr. Dabbene was an Elected Member of the Board of Governors, from 2014 to 2016. He has served as the vice president for publications, from 2015 to 2016. He has also served as an Associate Editor for Automatica (2008–2014), and IEEE Transactions on Automatic Control (2008-2012) and as Senior Editor of the IEEE Control Systems Society Letters (2018–2023), and he is currently Senior Editor for the IEEE Transactions on Control Systems Technology. He chaired the IEEE-CSS Italy Chapter (2019–2024) and since 2023 he serves as NMO representative for Italy at the International Federation of Automatic Control (IFAC). He is recipient of the 2024 IEEE CSS Distinguished Member Award.
\end{IEEEbiography}
\begin{IEEEbiography}[{\includegraphics[width=1in,height=1.25in,clip,keepaspectratio]{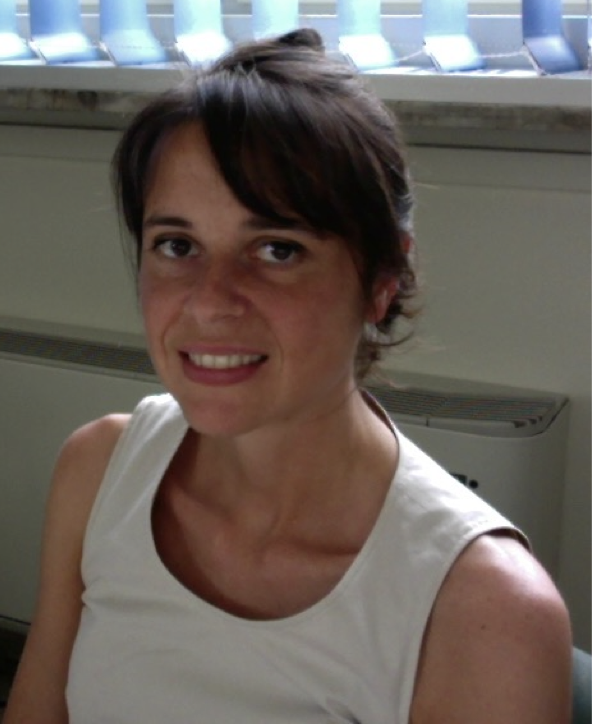}}]{Chiara Ravazzi} is a Senior Researcher at the Italian National Research Council (CNR-IEIIT) and adjunct professor at the Politecnico di Torino. She obtained the Ph.D. in Mathematical Engineering from Politecnico di Torino in 2011. In 2010, she spent a semester as a visiting scholar at the Massachusetts Institute of Technology (LIDS), and from 2011 to 2016, she worked as a post-doctoral researcher at Politecnico di Torino (DISMA, DET). She joined the Institute of Electronics and Information Engineering and Telecommunications (IEIIT) of the National Research Council (CNR) in the role of a Tenured Researcher (2017-2022). Furthermore, she served as an Associate Editor for IEEE Transactions on Signal Processing from 2019 to 2023, and she currently holds the same position for IEEE Transactions on Control Systems Letters (since 2021) and the European Journal of Control (since 2023). She has achieved the national scientific qualification as a full professor in the field of Automatica (09/G1) and as Associate Professor in the area of Telecommunications (09/F2).
\end{IEEEbiography}
\end{document}